\def\ps@headings{%
\def\@oddhead{\mbox{}\scriptsize\rightmark \hfil \thepage}%
\def\@evenhead{\scriptsize\thepage \hfil \leftmark\mbox{}}%
\def\@oddfoot{}%
\def\@evenfoot{}}
\newtheorem{theorem}{Theorem}[section]
\newtheorem{lemma}{Lemma}[section]
\newtheorem{proposition}{Proposition}[section]
\newtheorem{remark}[theorem]{Remark}
\newcommand{\revbegin}{\color{red}} % starting sentences for paper revision
\newcommand{\revend}{\color{black}} % ending sentences for paper revision
\newcommand{\rev}{\color{red}} % for paper revision
\newcommand{\revno}{\color{black}} % for paper revision
\newcommand{\revbegin}{} % starting sentences for paper revision
\newcommand{\revend}{} % ending sentences for paper revision
\newcommand{\rev}{} % for paper revision
\newcommand{\revno}{} % for paper revision
\begin{document}
%%%%%%%%%%%%%%%%%%%%%%%%%%%%%%%%%%%%%%%%%%%%%%%%%%%%%%%%%%%%%%%%%%%

\title{Spatial Performance Analysis and Design Principles for Wireless Peer Discovery}

\author{Taesoo Kwon,~\IEEEmembership{Member,~IEEE}
        and Ji-Woong Choi,~\IEEEmembership{Senior Member,~IEEE}

%\thanks{Manuscript received October 9, 2013; revised February 19, 2014; accepted April 21, 2014.
%  The associate editor coordinating the review of this paper and approving it for publication was Dr. M. Bennis.}
\thanks{T. Kwon is with the Electronics and Telecommunications Research Institute (ETRI), Daejeon, 305-700, South Korea (e-mail: tskwon@etri.re.kr).}
\thanks{J.-W. Choi is with Department of Information \& Communication Engineering, Daegu Gyeongbuk Institute of Science and Technology (DGIST), Daegu, 771-873, South Korea (e-mail: jwchoi@dgist.ac.kr).}
%\thanks{This work was supported in part by Basic Science Research Program through the National Research Foundation of Korea (NRF) funded by the Ministry of Science, ICT \& Future Planning (2012R1A1A1009742) and DGIST R\&D Program of the Ministry of Science, ICT and Future Planning (14-BD-0404).}
%\thanks{Digital Object Identifier xxx}

}

%\markboth{Submitted to IEEE Communications Letters, Jun.~2013}
\markboth{Kwon and Choi: Spatial Performance Analysis and Design Principles for Wireless Peer Discovery}
{Kwon and Choi: Spatial Performance Analysis and Design Principles for Wireless Peer Discovery}

\maketitle

\begin{abstract}
  In wireless peer-to-peer networks that serve various proximity-based applications, peer discovery is the key to identifying other peers with which a peer can communicate and an understanding of its performance is fundamental to the design of an efficient discovery operation.
  This paper analyzes the performance of wireless peer discovery through comprehensively considering the wireless channel, spatial distribution of peers, and discovery operation parameters.
  The average numbers of successfully discovered peers are expressed in closed forms for two widely used channel models, i.e., the interference limited Nakagami-$m$ fading model and the Rayleigh fading model with nonzero noise, when peers are spatially distributed according to a homogeneous Poisson point process.
  These insightful expressions lead to the design principles for the key operation parameters including the transmission probability, required amount of wireless resources, level of modulation and coding scheme (MCS), and transmit power.
  \revbegin Furthermore, the impact of shadowing on the spatial performance and suggested design principles is evaluated using mathematical analysis and simulations. \revend
\end{abstract}

\begin{IEEEkeywords}
  Peer discovery, neighbor discovery, stochastic geometry, D2D networks, random access protocol.
\end{IEEEkeywords}

%==============================================================================
\section{Introduction} \label{sec:Intro}
%==============================================================================

  Recently, it is considered that wireless peer-to-peer communications will enable novel and significant opportunities such as proximal social networking, network offloading, and public safety \cite{LTE_ProSe}.
  Accordingly both industrial and academic communities have begun to increasingly investigate the potential new services and technical challenges \cite{Fodor12_D2DmagEricsson,Zou13_P2P_Discovery}.
  For wireless peer networking, each peer should first be able to identify other peers with which it can communicate before transmitting and receiving data.
  This operation is referred to as \emph{peer discovery}, which is the most basic process for establishing connections and building topology information in various wireless networks including device-to-device (D2D) networks and sensor networks.
  However, the performance of peer discovery is significantly affected by the randomness of the wireless channel as well as peer location.
  The primary focus of this paper is to quantify the implications of the wireless channel and spatial distribution of peers on wireless peer discovery and to derive design principles from the results.

  Even though peer discovery is fundamental to the operation of wireless networks, wireless resources for this process are a control overhead that does not contribute to increasing data capacity.
  In this regard, peer discovery should be designed to find as many peers as possible using a small amount of wireless resources, in order to minimize the overhead.
  In such a design, wireless resources for peer discovery should be spatially shared among peers, and this spatial reuse results in performance degradation due to interference signals.
  In this sense, understanding the effect of the interference signals from spatially distributed peers is the key to designing efficient peer discovery schemes.
  Recent studies have attempted to statistically model a wireless network topology using the mathematical tool of stochastic geometry \cite{Baccelli09_Aloha,Haenggi09_SG_Survey,Andrews10_SG_SpatialModelingCM}: this model facilitates the derivation of the spatial probability distribution of the signal to interference and noise ratio (SINR).
  This paper investigates wireless peer discovery based on this stochastic geometry theory.

%  In this sense, understanding the effect of the interference signals from spatially distributed peers is fundamental to the design of efficient peer discovery algorithms.
%  This paper focuses on the spatial performance of this wireless peer discovery and the design principle in terms of discovering as many peers as possible, by considering wireless channel models as well as random access protocol operation.

%==============================================================================
\subsection{Related Work} \label{ssec:RelatedWork}
%==============================================================================

  %Over the years, a variety of protocols for fast and efficient wireless peer discovery have been proposed, and
  Several studies have suggested aggressive schemes where each peer can transmit its unique signal and simultaneously receive multiple signatures from other peers for rapid and collision-free peer discovery, e.g., \cite{Lin04_SubspaceNbrDiscovery,Guo08_Luo_NbrGroupTest}, but a simple random access protocol is still regarded as the basis of wireless peer discovery \cite{McGlynn01_NbrDiscBirthday,Borbash07_AsyncNbrDisc,Vasudevan09_NbrDiscCoupon,Vigato11_JointDiscovery,Zeng12_Nbr_Multipacket,Jeon12_NeighborDiscovery,Hamida08_NbrDiscovery,Bacelli12_SG_D2Ddiscovery} because the lack of a priori information about peers in dynamic wireless networks may only provide the uncoordinated sharing of peer discovery resources among peers.

  %%%%% packet collision model, emphasizing the need of the consideration of physical layer implications
  The primary reason for performance degradation in a random access protocol is packet collisions due to the simultaneous transmission of peers; thus, several studies have investigated the quantification and improvement of the peer discovery performance based on the packet collision model \cite{McGlynn01_NbrDiscBirthday,Borbash07_AsyncNbrDisc,Vasudevan09_NbrDiscCoupon}.
  However, this collision model oversimplifies wireless receiving operations.
  In fact, the success or failure of packet reception is primarily determined by the physical layer metrics, e.g., SINR, rather than whether or not packets simply collide.
  In addition, the requirement of this received SINR depends on the physical transceiving scheme, such as the receiver structure and level of modulation and coding scheme (MCS).
  %%%%% multipacket reception, but not consdiering node distribution
  Based on this, there have been attempts to understand the effect of physical layer characteristics including the receiver structure and wireless channel \cite{Vigato11_JointDiscovery,Zeng12_Nbr_Multipacket}.
  In \cite{Vigato11_JointDiscovery}, a joint iterative decoding method for multiuser detection was applied to peer discovery but its system performance improvement was only evaluated using simulation.
  The work of \cite{Zeng12_Nbr_Multipacket} analyzed the performance of multipacket reception based on the conventional packet collision model.
  %by modeling the simultaneous reception of multiple packets at a receiver based on the conventional packet collision model rather than modeling physical interference signals from peers.
  These approaches remains insufficient to reveal the implications of the randomness of wireless channels and peer locations.

  There have also been recent studies that analyze the spatial performance by statistically modeling peer location and wireless channel \cite{Jeon12_NeighborDiscovery,Hamida08_NbrDiscovery,Bacelli12_SG_D2Ddiscovery}.
  In \cite{Jeon12_NeighborDiscovery}, the received power of the signals from randomly located peers was modeled in a probabilistic manner and the multipacket reception capability was assumed.
  However, the performance was only expressed in a form with as many cumbersome integrations as the number of peers; therefore, this result could not explicitly present the design implications of wireless peer discovery.
  In contrast, the works of \cite{Hamida08_NbrDiscovery,Bacelli12_SG_D2Ddiscovery} attempted to mathematically analyze the peer discovery performance with interference considerations using a stochastic geometry framework \cite{Baccelli09_Aloha,Haenggi09_SG_Survey,Andrews10_SG_SpatialModelingCM}.
\revbegin
  The work of \cite{Hamida08_NbrDiscovery} compared the packet collision and SINR models when peers were distributed according to a homogeneous Poisson point process (PPP), and it expressed the average number of discovered peers in a closed form under the Rayleigh fading channel when the noise power can be ignored.
  A similar result was also presented in \cite{Bacelli12_SG_D2Ddiscovery}.
  The results under this specific channel model, i.e., the Rayleigh fading with zero noise power, provide a basis for the analysis under channel models that belong to the exponential family, e.g. the Nakagami-$m$ fading channel \cite{ElSawy13_StochGeomSurvey}.
  However, the explicit derivation of the performance under more general channel models, such as incorporating the Nakagami-$m$ fading channel, nonzero background noise power, and shadowing, also has the significant merit because it enables the clarification of the relationship between wireless channels and discovery operation parameters, which leads to the design principles for the key operation parameters including the transmission probability for a half duplex operation, received SINR requirement, and transmit power under various channel environments.
  This extension was not considered in \cite{Hamida08_NbrDiscovery} and \cite{Bacelli12_SG_D2Ddiscovery}.
\revend

%==============================================================================
\subsection{Contributions and Organization} \label{ssec:Contributions}
%==============================================================================

  This paper investigates wireless peer discovery with respect to the mean number of successfully discovered peers, which is denoted by $\mathbb{E}\{S\}$, by comprehensively considering the wireless properties as well as the discovery operation properties.
  The main contributions are highlighted into the following three aspects.

  \subsubsection{Deriving the Average Number of Successfully Discovered Peers}
  The closed forms for $\mathbb{E}\{S\}$ are derived for two widely used channel models: (i) the interference limited Nakagami-$m$ fading model and (ii) the Rayleigh fading model with nonzero noise power.
\revbegin
  These elegant expressions comprehensively quantify the effect of the wireless channels, spatial peer distribution, and operation parameters.
  In particular, these results clarify the impact of the Nakagami-$m$ fading channel and noise power, unlike prior studies \cite{Hamida08_NbrDiscovery,Bacelli12_SG_D2Ddiscovery} that have only derived the closed form expression of $\mathbb{E}\{S\}$ under the Rayleigh fading channel with zero noise power.
  For example, the mathematical analysis reveals that $\mathbb{E}\{S\}$ is independent of the Nakagami-$m$ fading parameter (i.e., $m$) under an interference limited environment where the aggregate interference overwhelms the noise power.
\revend
%  The closed forms for $\mathbb{E}\{S\}$ are derived for two widely used channel models: (i) the interference limited Nakagami-$m$ fading model and (ii) the Rayleigh fading model with nonzero noise.
%  These elegant expressions explicitly quantify the effect of the wireless properties including the path loss, noise power, fading channel, and node density, as well as the discovery operation properties including the amount of resources, required received SINR, transmission probability for a half duplex operation, and transmit power.
%  In particular, the results of this paper clarify the effect of the Nakagami-$m$ fading channel and noise power, unlike prior works  \cite{Hamida08_NbrDiscovery,Bacelli12_SG_D2Ddiscovery} only deriving the closed form expression of $\mathbb{E}\{S\}$ only under the Rayleigh fading channel with zero noise power.
%  For instance, the mathematical analysis naturally reveals the remarkable fact that $\mathbb{E}\{S\}$ is independent of the Nakagami-$m$ fading parameter (i.e., $m$) under the interference limited environment where the aggregate interference overwhelms noise power.

  \subsubsection{Suggesting Design Principles for Discovery Operation Parameters}
  The design of optimal or suboptimal discovery operation parameters is investigated in terms of maximizing $\mathbb{E}\{S\}$ under the two channel models mentioned above.
  An important difference between the two models is the noise; it is demonstrated that this difference may result in significantly different design principles for the parameters.
  For example, regarding the transmission probability for a half duplex operation that is denoted by $\rho$, $\mathbb{E}\{S\}$ increases as $\rho$ decreases under an interference limited environment, whereas it becomes a unimodal function of $\rho$ when the noise power cannot be ignored.
  The insightful results derived in this paper are summarized in Table~\ref{table:Summary}.

  \subsubsection{{\revno Evaluating Performance under Various Channel Models}}
\revbegin
  %When the shadowing of links are independent and identically distributed (i.i.d.), the shadowing effect can be modeled by scaling the peer density of a homogeneous PPP \cite{Dhillon13_DlHetnetGen,Madhusudhanan12_GenScs}.
  The analytical results for the two channel models without shadowing are extended to those under channel models that incorporate general shadowing through applying the displacement theorem, similar to the work presented in \cite{Dhillon13_DlHetnetGen}.
  This extension reveals that the performance under the interference-limited scenario is invariant to the shadowing distribution; however, for a nonzero noise power, the shadowing tends to reduce the impact of the noise power.
  In contrast, the analytical results do not embrace wireless channel models that incorporate all of the general path loss exponent, Nakagami-$m$ fading, nonzero noise power, and shadowing.
  In order to fill this void, simulations are used to demonstrate that the performances under such general channel models are consistent with those derived analytically under the specific channel models.
  %Accordingly, the results for the two channel models without shadowing are extended to those under channel models incorporating general shadowing:
  %The analytical results reveals that the performance is invariant to the shadowing distribution under interference-limited scenarios.
  %In contrast, when the noise power cannot be ignored, it is demonstrated that the lognormal shadowing tends to result in $\mathbb{E}\{S\}$ larger than that without shadowing and $\mathbb{E}\{S\}$ approaches that for the zero noise power as the standard deviation of the shadowing increases.
\revend

  The remainder of this paper is organized as follows.
  Section~\ref{sec:HelloProtocol} describes the system model for a multichannel random hello protocol and presents the spatial performance under a general wireless channel model in terms of the average number of successfully discovered peers.
  Sections~\ref{sec:ZeroNoise} and \ref{sec:NonzeroNoise} analyze the spatial performance of the peer discovery protocol and suggest design principles for discovering as many peers as possible, under the interference limited Nakagami-$m$ fading channel and the Rayleigh fading channel with nonzero noise, respectively.
\revbegin
  Section~\ref{sec:Shadowing} extends the results derived in the previous two sections into those for wireless channel models that incorporate arbitrary shadowing.
\revend
  Then, Section~\ref{sec:Results} discusses numerical results, and Section~\ref{sec:Conclusions} concludes the paper.

%==============================================================================
\section{Multichannel Random Hello Protocol for Wireless Peer Discovery} \label{sec:HelloProtocol}
%==============================================================================
%
%%
%\begin{itemize}
%  \item $M$: The number of resources that node can use for random access. Large $M$ causes the waste of wireless resources. Thus, the system needs to keep $M$ small.
%  \item $\rho$: Transmission probability. $(1-\rho)$ means the listening probability.
%  \item $\textstyle \tilde{\sigma}^2 = \frac{\sigma^2}{p}$ where $p$ denotes the transmit power of nodes.
%\end{itemize}
%%
%
%  We assume that nodes are spatially distribution according to a homogeneous Poisson point process (PPP) with the density of $\lambda$.
%
%  SINR modeling. Assume the interference limited environment.

%==============================================================================
\subsection{System Model} \label{ssec:Model}
%==============================================================================

  This paper considers the multichannel random hello protocol for wireless peer discovery illustrated in Fig.~\ref{fig:HelloProtocolModel} when peers or nodes\footnote{In this paper, both the terms are used synonymously.} are randomly distributed in a two-dimensional space.
  %Time-frequency resources depicted in Fig.~\ref{fig:HelloProtocolModel} are assumed to be orthogonal.
  %That is, signals over different resources do not interfere each other.
  %All nodes are assumed to be time synchronized for different time-frequency resources are orthogonal.
\revbegin
  The model assumes resource orthogonality, i.e., signals transmitted over different resources do not interfere with each other.
  The premise for this orthogonality is global synchronization \cite{Corson10_ProximityInterworking}.
  If nodes are not precisely time synchronized, the interference that results from time mismatches may significantly degrade the performance.
  However, in general, the time synchronization in distributed wireless networks is a resource and energy intensive task.
  External signals from the existing infrastructure, e.g. nearby cellular base stations, may render this difficult task simpler \cite{Corson10_ProximityInterworking,Wu13_FlashLinQ}; such signals are exploited as timing reference signals for rough synchronization, and then the remaining time offsets are further corrected through additional synchronization procedures among the nodes in order that the residual timing errors can be readily accommodated in the signal level, e.g., using cyclic prefix (CP) in OFDMA systems.
  %This global synchronization can become key to improving the performance of decentralized network \cite{Corson10_ProximityInterworking,Wu13_FlashLinQ} and this paper accordingly assumes that all nodes are time synchronized. %or nodes within communication range are at least insensitive to residual timing errors.
  Accordingly, this paper assumes that all nodes are time synchronized and does not consider performance degradation due to time mismatches.
\revend
  The model also assumes that all nodes operate in a half duplex manner.
  For this half duplex operation, a node decides whether it transmits or receives a hello packet identifying a node every time slot in a probabilistic manner, and $\rho$ denotes the probability that a node transmits in a time slot, i.e., the \emph{transmission probability}.

\begin{figure}[t]
\centering
\includegraphics[width=9cm]{./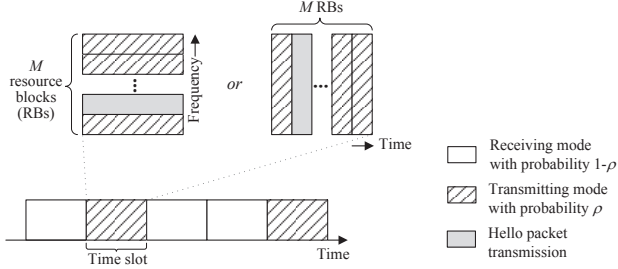}%13cm
\vspace{-0.59cm}
\caption{Multichannel random hello protocol for wireless peer discovery.}
\label{fig:HelloProtocolModel}
%(this figure shows only multiple RBs in a frequency domain but one slot can also be divided into RBs in a time domain).}
\end{figure}

  One time slot consists of $M$ resource blocks (RBs) in a frequency or time domain, and one RB is used for a hello packet transmission.
  For peer discovery, nodes in the transmitting mode broadcast their hello packet using one RB randomly chosen among $M$ RBs while nodes in the receiving mode try to detect the packets simultaneously over all $M$ RBs in a time slot.
  Let $\xi > 0$ denote the minimum received SINR required for the successful reception of a packet.
  If a node in the receiving mode receives a hello packet with SINR above $\xi$, then it means that this node successfully discovers the node that transmits the packet.
  Note that the value of $\xi$ determines the MCS level at which a hello packet is transmitted.

  The model assumes that the nodes are spatially distributed according to a homogeneous PPP with node density $\lambda$, which is denoted by $\Phi$.
  In order to investigate the node average performance, the performance of a reference receiving node is observed and such a node is referred to as \emph{a typical node}.
  A typical node is assumed to be located at the origin and search potential target node $i$ in a time slot.
  %Consider a typical node that probes node $i$ on a slot.
  If node $i$ is transmitting in the same time slot, the signal transmitted by node $i$ becomes the desired signal of a typical node.
  Assume that node $i$ transmits a hello packet using the $m$th RB in the time slot.
  Under these assumptions, all signals sent over the $m$th RB by nodes in the transmitting mode, other than node $i$, become interference.
  %Thus, the spatial distribution of interfering nodes is a key factor to decide the amount of interference that a typical node experiences.
\revbegin
  Note that a typical node is interested in hello packets from all other nodes; thus, target node $i$ indicates an arbitrary node rather than a specific node.
  Therefore, according to Slivnyak's theorem \cite{Stoyan96_StochasticGeom}, nodes except a typical node and the target node $i$ still constitute a homogeneous PPP with the same density as $\lambda$.
\revend
  These nodes are potential interferers.
  Let $\Phi_{q}$ denote a homogeneous PPP with density $\lambda q$ that results from the independent thinning of homogeneous PPP $\Phi$ with retention probability $q$.
  In a given time slot, each node is transmitting with probability $\rho$ and it uses the same RB as that of node $i$ with probability $1/M$.
  Thus, the spatial distribution of the interfering nodes can be modeled as the thinning of an original PPP with a retention probability $\rho/M$, and it is expressed as a homogeneous PPP with density $\lambda \rho / M$, i.e., $\textstyle \Phi_{\rho/M}$.
\revbegin
  In fact, each node in $\textstyle \Phi_{\rho/M}$ becomes both a potential target node and a interferer of a typical node.
\revend
  $X_j$ denotes the location of node $j$ and $|X_j|$ represents the distance from the origin to $X_j$.
  Assume that all nodes have the same transmit power $p$.
  The standard power loss propagation model with the path loss exponent of $\alpha$ ($>2$) is supposed.
  Let $h_{i}$ and $g_{j}$ denote the fading power gains\footnote{They may denote the channel power gain including the shadowing as well as the Nakagami-$m$ or Rayleigh fading, depending on the wireless channel model.} that the desired signal from node $i$ and the interfering signal from node $j$ undergo, respectively.
  It is assumed that $\{h_{i}\}$ and $\{g_{j}\}$ are independently and identically distributed (i.i.d.), respectively.
  Fig.~\ref{fig:SpatialModel} explains the spatial model considered in this paper.
  %The fading power gain of the link between node $j$ and a typical node is denoted by $g_j$, and it follows an arbitrary fading distribution, e.g. lognormal shadowing and  multipath fading.

%
\begin{figure}[t]
\centering
\includegraphics[width=9cm]{./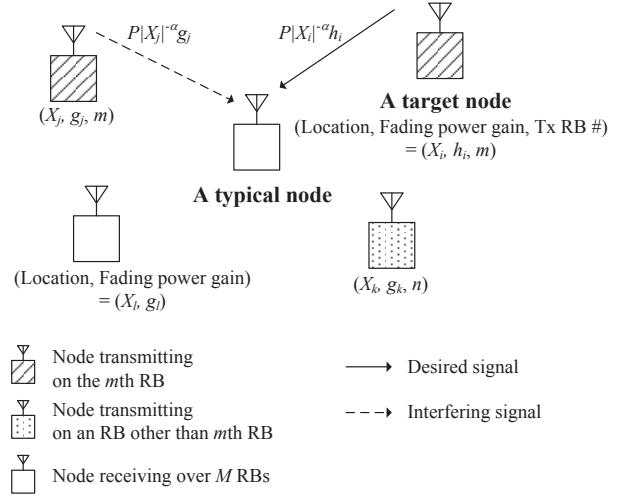}%13cm
\vspace{-0.5cm}
\caption{Spatial model for performance analysis of wireless peer discovery.} \label{fig:SpatialModel}
\end{figure}

\revbegin
  Eventually, when $\Xi(X_i)$ denotes the received SINR at a typical node for the hello packet transmitted by a target node located on $X_i$,
  %the SINR of a hello packet sent by potential target node $i$ in the transmitting mode at a typical node is expressed by
%%
%\begin{IEEEeqnarray}{ll}\label{eq:Sinr}
%  \mathrm{SINR} & = \frac{P|X_i|^{-\alpha} g_i}{\sum_{j \in \Phi} P|X_j|^{-\alpha} g_j + \tilde{\sigma}^2} \\
%    & = \frac{|X_i|^{-\alpha} g_i}{\sum_{j \in \Phi} |X_j|^{-\alpha} g_j + \sigma^2},
%\end{IEEEeqnarray}
%%
%
\begin{IEEEeqnarray}{ll}\label{eq:Sinr}
  \Xi(X_i) & = \frac{|X_i|^{-\alpha} h_i}{\sum_{j \in \Phi_{\rho/M}} |X_j|^{-\alpha} g_j + \sigma^2},
\end{IEEEeqnarray}
\revend
  where $\textstyle \sigma^2 \triangleq \frac{\tilde{\sigma}^2}{p}$ and $\tilde{\sigma}^2$ denotes the noise power.
  Herein, $\textstyle \frac{1}{\sigma^2} = \frac{p}{\tilde{\sigma}^2}$ can be understood as the average received signal to noise ratio (SNR) at a unit distance, i.e., when $|X_i|=1$.

\begin{table*}[!t]
\caption[Caption for LOF]{The property of the average number of successfully discovered nodes, $\mathbb{E}\{S\}$\protect\footnotemark.}
\centering
\label{table:Summary}
    \begin{tabular}{|c||c|c|}
    \hline
        & Interference limited case ($\sigma^2=0$)
        & Nonzero noise power case ($\sigma^2>0$)
    \\ \hline\hline
        Path loss exponent, $\alpha$
        & Increasing with $\alpha$
        & Increasing with $\alpha$$^\dag$ %Depending on the amount of interference$^\dag$
    \\ \hline
        Nakagami-$m$ fading parameter, $m$
        & Independent of $m$
        & Insignificant$^\dag$
    \\ \hline
        Node density, $\lambda$
        & Independent of $\lambda$
        & Increasing with $\lambda$, but saturated
    \\ \hline
        Number of RBs, $M$ (for a fixed $\xi$)
        & Linearly increasing with $M$
        & Increasing with $M$, but saturated
    \\ \hline
        SINR threshold, $\xi$
        & Unimodal function of $\xi$
        & Maybe, unimodal function of $\xi$$^\dag$
    \\ \hline
        Transmission probability, $\rho$
        & Increasing as $\rho$ decreases
        & Unimodal function of $\rho$
    \\ \hline
        Transmit power, $p$
        & Independent of $p$
        & Increasing with $p$, but saturated
    \\ \hline
        \rev
        Standard deviation of lognormal shadowing, $\chi$
        & \rev Independent of $\chi$
        & \rev Increasing with $\chi$, but saturated
    \\ \hline
    \end{tabular}
\end{table*}
%

%==============================================================================
\subsection{Spatial Performance Metric} \label{ssec:Performance}
%==============================================================================

  %The multichannel random hello protocol illustrated in Fig.~\ref{fig:HelloProtocolModel} has a couple of key design parameters such as $M$ and $\rho$, determining its performance.
  In this paper, wireless peer discovery aims to find as many nodes as possible, i.e., to maximize the average number of successfully discovered nodes.
  The successful peer discovery requires the fulfillment of the following three conditions:
    (i) a typical node is in the receiving mode;
    (ii) a target node is in the transmitting mode;
    and (iii) a hello packet that the target node transmits should be received with an SINR above $\xi$ at a typical node.
  The status of node $j$ is represented by $Z_j$, i.e., $Z_j=0$ if node $j$ is in the receiving mode and $Z_j=1$ otherwise.
  Let $P(X|Z_0=0)$ denote the probability that a typical node successfully discovers a target node located on $X$ when a typical node is in the receiving mode.
  Then, $P(X|Z_0=0)$ is given by
\begin{IEEEeqnarray}{ll}\label{eq:PxDef}
  %P(X) = \rho(1-\rho)\Pr\left\{ \Xi(X) > \xi \right\},
  P(X|Z_0=0) = \rho\Pr\left\{ \Xi(X) > \xi \right\}.
\end{IEEEeqnarray}
  This success probability depends on the wireless channels and spatial distribution of the nodes, thus $P(X|Z_0=0)$ is expressed as a function of wireless channel parameters and node density $\lambda$ as well as discovery operation parameters including $M$, $\xi$, $\rho$, and $p$.

\revbegin
  This paper considers wireless channel models that embrace shadowing as well as Nakagami-$m$ fading.
  That is, the fading power gain is given by the product of the gains that result from the shadowing and Nakagami-$m$ fading.
  The Nakagami-$m$ fading model does not only generalize or approximate various useful fading channels such as the Rayleigh and Rician fading channels, but it also allows a closed form of $\mathbb{E}\{S\}$ in some specific cases, e.g., when $\sigma^2=0$, where $\mathbb{E}\{S\}$ denotes the average number of nodes that a typical node successfully discovers over $M$ RBs.
  This is derived in the following sections.
  In addition, the results yielded in this Nakagami-$m$ fading model can be readily extended to those for wireless channel models that incorporate shadowing as long as the shadowing of links is i.i.d. \cite{Dhillon13_DlHetnetGen,Madhusudhanan12_GenScs}.
  Therefore, the mathematical analysis in Sections~\ref{sec:ZeroNoise} and \ref{sec:NonzeroNoise} concentrates on the Nakagami-$m$ fading and Rayleigh fading without shadowing.
  Then, the results will be extended to those for more general wireless channel models that incorporate shadowing.
\revend

  When the Nakagami-$m$ fading model is only considered, the fading power gain $h$ follows the Gamma distribution and its complementary cumulative distribution function (ccdf) is given as follows:
\begin{IEEEeqnarray}{l}\label{eq:NakaCcdf}
  \Pr\{h > x\} = \exp(-mx) \sum_{k=0}^{m-1} \frac{m^k}{k!} x^k,
\end{IEEEeqnarray}
  where $m$ is the Nakagami-$m$ fading parameter.
  Under this Nakagami-$m$ fading channel model, the following lemma is a start toward deriving a simple form of  $\mathbb{E}\{S\}$.
  %Then, when $g_j$ denotes the fading power gain, $g_j$'s are independently and identically distributed (i.i.d.) Gamma random variables with unit mean.
%
\begin{lemma}\label{lem:SnakaNzN}
  When the desired and interfering signals undergo the Nakagami-$m$ fading with $m_s$ and $m_i$, which are positive integers,
  %and the noise power is $\sigma^2$,
  the average number of nodes that a typical node successfully discovers is given as follows:
\begin{IEEEeqnarray}{ll}\label{eq:EsNakaNzN}
  \hspace{-0.2cm} \mathbb{E}\{S\} = 2 \pi \lambda \rho(1-\rho) \sum_{k=0}^{m_s - 1} \frac{(-m_s \xi)^k}{k!} \int_{0}^{\infty} r^{k\alpha + 1} \cdot \IEEEnonumber\\
  \hspace{0.5cm}
  \left. \frac{d^k \exp\left( -\frac{\lambda\rho}{M} \pi \zeta^{\frac{2}{\alpha}} \Delta_i(m_i,\alpha) - \zeta\sigma^2 \right)}{d\zeta^k} \right|_{\zeta=m_s \xi r^\alpha} dr, %\IEEEnonumber \\
\end{IEEEeqnarray}
%
  %where $\textstyle \phi (\zeta,\alpha) \triangleq \pi \zeta^{\frac{2}{\alpha}} \Delta_i(m_i,\alpha)$ and
  where $\textstyle \Delta_i(m_i,\alpha) \triangleq m_i^{-\frac{2}{\alpha}} \frac{\Gamma\left(1-\frac{2}{\alpha}\right) \Gamma\left(m_i+\frac{2}{\alpha}\right)}{\Gamma(m_i)}$ and $\textstyle \Gamma(x)\triangleq\int_{0}^{\infty} t^{x-1}\exp(-t)dt$ denotes the Gamma function.
%    The derivative in (\ref{eq:EsNakaNzN}) can be reexpressed as follows.
%%
%\begin{IEEEeqnarray}{ll}\label{eq:EsNakaNzNDerv}
%    \frac{d^k \exp\left(-\lambda \xi(\zeta,\alpha)-\zeta \tilde{\sigma}^2\right)}
%    {d\zeta^k} = \\
%    \exp \left(-\lambda_s \xi(\zeta,\alpha) - \zeta \tilde{\sigma}^2\right) \sum_{l=0}^{k}
%    \frac{1}{l!}\sum_{j=0}^{l}(-1)^{l+j}\binom{l}{j} \\
%    \hspace{1cm}\left[\lambda\xi(\zeta,\alpha)+\zeta \tilde{\sigma}^2
%    \right]^j \frac{\partial^k}{\partial \zeta^k} \left[\lambda \xi(\zeta,\alpha)+\zeta
%    \tilde{\sigma}^2 \right]^{l-j}.
%\end{IEEEeqnarray}
%%
\end{lemma}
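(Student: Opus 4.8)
The plan is to reduce the expected count $\mathbb{E}\{S\}$ to a spatial integral of the per-target success probability and then evaluate that probability in closed form. A typical node is in the receiving mode with probability $1-\rho$, and since its mode is independent of the other nodes' activity, I would write $\mathbb{E}\{S\}$ as the expected sum of success indicators over all potential targets. By the Campbell--Mecke formula together with Slivnyak's theorem (invoked in Section~\ref{sec:HelloProtocol}), the targets form a homogeneous PPP of density $\lambda$, so
\[
\mathbb{E}\{S\} = (1-\rho)\,\lambda \int_{\mathbb{R}^2} P(X|Z_0=0)\, dX .
\]
Substituting (\ref{eq:PxDef}) and passing to polar coordinates with $|X|=r$ immediately produces the prefactor $2\pi\lambda\rho(1-\rho)$ and the outer integral $\int_0^\infty r\,\Pr\{\Xi(X)>\xi\}\,dr$ appearing in (\ref{eq:EsNakaNzN}); the residual factor $r$ from $dX = 2\pi r\,dr$ is what upgrades $r^{k\alpha}$ into $r^{k\alpha+1}$ in the final expression.

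Next I would evaluate the conditional success probability for a target at distance $r$. Writing the aggregate interference as $I=\sum_{j\in\Phi_{\rho/M}}|X_j|^{-\alpha}g_j$, the event $\{\Xi(X)>\xi\}$ from (\ref{eq:Sinr}) is equivalent to $\{h_s > \xi r^\alpha(I+\sigma^2)\}$. Because $h_s$ is independent of $I$ and follows the Nakagami-$m_s$ law, I would apply the ccdf (\ref{eq:NakaCcdf}) conditionally on $I$, which yields a finite sum of terms of the form $(I+\sigma^2)^k\,e^{-m_s\xi r^\alpha(I+\sigma^2)}$ for $k=0,\dots,m_s-1$. The \emph{key device} is to write, with $\zeta \triangleq m_s\xi r^\alpha$,
\[
(I+\sigma^2)^k e^{-\zeta(I+\sigma^2)} = (-1)^k \frac{d^k}{d\zeta^k}\, e^{-\zeta(I+\sigma^2)},
\]
so that after taking the expectation over $I$ each moment becomes a $k$-th derivative of the Laplace transform of the interference-plus-noise exponent. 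Collecting $\zeta^k = (m_s\xi)^k r^{k\alpha}$ with the sign $(-1)^k$ produces exactly the coefficient $(-m_s\xi)^k/k!$ and the derivative structure in (\ref{eq:EsNakaNzN}); this is what renders the general integer-$m_s$ case tractable.

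It then remains to compute $\mathbb{E}_I\!\left[e^{-\zeta(I+\sigma^2)}\right] = e^{-\zeta\sigma^2}\,\mathcal{L}_I(\zeta)$. I would obtain the Laplace transform of the interference from the probability generating functional of the PPP $\Phi_{\rho/M}$ of density $\lambda\rho/M$,
\[
\mathcal{L}_I(\zeta) = \exp\!\left(-\frac{\lambda\rho}{M}\int_{\mathbb{R}^2}\mathbb{E}_g\!\left[1-e^{-\zeta|x|^{-\alpha}g}\right] dx\right).
\]
Passing to polar coordinates and using the substitution $u=r^{-\alpha}$ reduces the inner integral to a Gamma integral, giving the factor $\pi\zeta^{2/\alpha}\Gamma(1-2/\alpha)\,\mathbb{E}_g[g^{2/\alpha}]$; since $g$ is Gamma with parameter $m_i$, the fractional moment evaluates to $\mathbb{E}_g[g^{2/\alpha}] = m_i^{-2/\alpha}\Gamma(m_i+2/\alpha)/\Gamma(m_i)$, which collects precisely into $\Delta_i(m_i,\alpha)$. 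Substituting $\mathcal{L}_I$ back and differentiating $k$ times reproduces the bracketed exponential $\exp(-\frac{\lambda\rho}{M}\pi\zeta^{2/\alpha}\Delta_i(m_i,\alpha)-\zeta\sigma^2)$ and its derivatives evaluated at $\zeta=m_s\xi r^\alpha$.

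The main obstacle I anticipate is analytic rather than combinatorial: justifying that differentiation and expectation may be interchanged in the moment-to-derivative step, and that the $k$-fold derivative may be pulled outside the outer $r$-integral so the finite sum emerges in the stated form. This requires dominated-convergence / differentiation-under-the-integral arguments that rely on the finiteness of the fractional moment $\mathbb{E}_g[g^{2/\alpha}]$ (guaranteed by $\alpha>2$) and on the rapid decay of $\mathcal{L}_I(\zeta)$. The generating-functional evaluation and the Gamma-moment computation are otherwise routine stochastic-geometry calculations, so the crux is controlling these interchanges uniformly for general integer $m_s$.
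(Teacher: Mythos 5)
Your proposal is correct and follows essentially the same route as the paper's proof: Campbell's theorem plus Slivnyak to reduce $\mathbb{E}\{S\}$ to $2\pi\lambda\rho(1-\rho)\int_0^\infty r\,\Pr\{h>\xi r^\alpha(I+\sigma^2)\}\,dr$, then the Nakagami ccdf expansion with the moment-to-derivative identity applied to the Laplace transform of the interference, evaluated via the PPP generating functional and the Gamma fractional moment $\mathbb{E}\{g^{2/\alpha}\}=m_i^{-2/\alpha}\Gamma(m_i+2/\alpha)/\Gamma(m_i)$. The only cosmetic differences are that the paper bookkeeps the $M$ resource blocks explicitly via $M$ times a sum over $\Phi_{1/M}$ (which cancels to your density-$\lambda$ formulation) and delegates the ccdf-to-derivative step to a cited prior result rather than deriving it from first principles as you do.
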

\begin{proof}
  See Appendix~\ref{app:proof:lem:SnakaNzN}.
\end{proof}
\footnotetext{The superscript $\dag$ denotes that the observation was from simulation results. All the others are mathematically demonstrated. Regarding the Nakagami-$m$ fading parameter, $m_s=m_i=m$ is assumed.}
  Lemma~\ref{lem:SnakaNzN} requires the integrations of higher order derivative terms, and
\revbegin
  it remains difficult to express the result in a closed form.
\revend
  However, it is noteworthy that $\mathbb{E}\{S\}$ in \eqref{eq:EsNakaNzN} can be expressed in more elegant forms by imposing environmental constraints.
  Therefore, this paper focuses more on two specific but widely used channel models: (i) the Nakagami-$m$ fading with $\sigma^2=0$ and $\alpha>2$, and (ii) the Rayleigh fading with $\sigma^2>0$ and $\alpha=4$.
  The results for these models do not only quantify the effect of wireless channels but also offer useful design principles for the crucial operation parameters of $M$, $\xi$, $\rho$, and $p$.
  The following two sections discuss these two channel models, and Table~\ref{table:Summary} summarizes the main results derived in this paper.
  %More detailed discussion about the results is provided in the next sections.
  %From Lemma~\ref{lem:SnakaNzN}, the following useful results for two special cases can be obtained.

%==============================================================================
\section{Spatial Analysis and Design Principles for Interference Limited Channels} \label{sec:ZeroNoise}
%==============================================================================

  This section investigates the interference limited case, which is modeled as $\sigma^2=0$ in Lemma~\ref{lem:SnakaNzN}.
  The results remain general in terms of the Nakagami-$m$ fading parameter and path loss exponent ($\alpha>2$).
  Under this wireless environment, the design principles for wireless peer discovery are suggested by deriving the values of $M$, $\xi$, and $\rho$ for maximizing $\mathbb{E}\{S\}$.

%==============================================================================
%\subsection{Nakagami-$m$ fading channel with $\sigma^2=0$ and $\alpha>2$}\label{ssec:ZeroNoise}
%==============================================================================
%==============================================================================
\subsection{Spatial Performance}\label{ssec:EsNakaZn}
%==============================================================================

  When $\sigma^2=0$, from Lemma~\ref{lem:SnakaNzN}, the following results are obtained.
\begin{proposition}\label{prop:EsNakaZn}
  When the desired and interference signals undergo the Nakagami-$m$ fading with $m_s$ and $m_i$, respectively, and $\sigma^2=0$, $\mathbb{E}\{S\}$ is given by
\begin{IEEEeqnarray}{l}\label{eq:EsNakaZn}
  %\mathbb{E}\{S\} = \frac{M(1-\rho)}{\Delta_i\left(m_i,\alpha \right) (m_s \xi)^{\frac{2}{\alpha}}} \sum_{k=0}^{m_s-1} \frac{1}{k!} \Delta_k (\alpha),
  \mathbb{E}\{S\} = \frac{\Delta_s\left(m_s,\alpha \right)}{\Delta_i\left(m_i,\alpha \right)} \frac{M(1-\rho)}{\xi^{\frac{2}{\alpha}}},
\end{IEEEeqnarray}
  where $\textstyle \Delta_s (m_s,\alpha) \triangleq m_s^{-\frac{2}{\alpha}}\frac{ \Gamma\left(m_s+\frac{2}{\alpha}\right) }{ \Gamma\left(1+\frac{2}{\alpha}\right)\Gamma\left(m_s\right) }$.
  In particular, if $m_s = m_i$, then
\begin{IEEEeqnarray}{l}\label{eq:EsNakaZnM}
  \mathbb{E}\{S\} = \frac{\sin(2\pi/\alpha)}{2\pi/\alpha} \frac{M(1-\rho)}{\xi^{\frac{2}{\alpha}}},
\end{IEEEeqnarray}
  which is \emph{independent of the Nakagami-$m$ fading parameter}.
  %$\textstyle \Delta_s (m_s,\alpha) \triangleq m_s^{-\frac{2}{\alpha}} \sum_{k=0}^{m_s-1} \frac{1}{k!} \sum_{l=0}^k \sum_{j=0}^{l}(-1)^{k+l+j}\cdot$ $\textstyle \binom{l}{j} \prod_{i=0}^{k-1} \left[\frac{2}{\alpha}(l-j)-i\right]$.
  %Here, $\Delta_0 (\alpha)$ is defined as $1$.
\end{proposition}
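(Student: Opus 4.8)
The plan is to specialize Lemma~\ref{lem:SnakaNzN} to $\sigma^2=0$ and evaluate the resulting single integral in closed form. Setting $\sigma^2=0$ in \eqref{eq:EsNakaNzN}, the object inside the derivative becomes $f(\zeta)\triangleq\exp(-c\zeta^{2/\alpha})$ with $c\triangleq\frac{\lambda\rho}{M}\pi\Delta_i(m_i,\alpha)$. First I would change the integration variable from $r$ to $\zeta=m_s\xi r^\alpha$. The point of this substitution is that at the evaluation point $(m_s\xi)^k r^{k\alpha}=\zeta^k$, so the prefactor $(-m_s\xi)^k$ together with the $r^{k\alpha}$ part of $r^{k\alpha+1}$ collapses into $(-\zeta)^k$, while the leftover $r\,dr$ turns into $\frac{1}{\alpha}(m_s\xi)^{-2/\alpha}\zeta^{2/\alpha-1}d\zeta$. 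After interchanging the finite sum and the integral, the whole quantity reduces to a constant multiple of $\sum_{k=0}^{m_s-1}\frac{(-1)^k}{k!}\int_0^\infty \zeta^{2/\alpha+k-1}f^{(k)}(\zeta)\,d\zeta$.

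Each term is then reduced to a single Mellin transform of $f$. Using the Mellin-transform-of-derivatives identity obtained by repeated integration by parts, $\int_0^\infty \zeta^{s-1}f^{(k)}(\zeta)\,d\zeta=(-1)^k\frac{\Gamma(s)}{\Gamma(s-k)}\int_0^\infty \zeta^{s-k-1}f(\zeta)\,d\zeta$ with $s=2/\alpha+k$, so every term is proportional to $\mathcal{M}[f](2/\alpha)=\int_0^\infty\zeta^{2/\alpha-1}\exp(-c\zeta^{2/\alpha})\,d\zeta$. This base integral is evaluated by the substitution $t=c\zeta^{2/\alpha}$, which turns it into a Gamma integral equal to $\frac{\alpha}{2c}$. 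The remaining $k$-sum is $\sum_{k=0}^{m_s-1}\frac{\Gamma(2/\alpha+k)}{k!}$, which I would collapse using the generalized binomial (hockey-stick) identity $\sum_{k=0}^{n-1}\frac{\Gamma(a+k)}{k!}=\frac{\Gamma(a+n)}{a\,\Gamma(n)}$, provable by a one-line induction, giving $\frac{\alpha}{2}\frac{\Gamma(m_s+2/\alpha)}{\Gamma(m_s)}$. Assembling the constants, $c$ cancels the $\lambda\rho$ and $\pi$ factors, the factor $\frac{1}{\Gamma(2/\alpha)}\cdot\frac{\alpha}{2}$ is rewritten as $\frac{1}{\Gamma(1+2/\alpha)}$ via $\Gamma(1+2/\alpha)=(2/\alpha)\Gamma(2/\alpha)$, and what survives is exactly $\frac{\Delta_s(m_s,\alpha)}{\Delta_i(m_i,\alpha)}\frac{M(1-\rho)}{\xi^{2/\alpha}}$, matching the definition of $\Delta_s$ in the statement.

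For the special case $m_s=m_i$, the $m$-dependent factors $m^{-2/\alpha}$, $\Gamma(m+2/\alpha)$, and $\Gamma(m)$ all cancel in the ratio $\Delta_s/\Delta_i$, leaving $\frac{1}{\Gamma(1+2/\alpha)\Gamma(1-2/\alpha)}$. Applying Euler's reflection formula in the form $\Gamma(1+z)\Gamma(1-z)=\pi z/\sin(\pi z)$ with $z=2/\alpha$ then yields the factor $\frac{\sin(2\pi/\alpha)}{2\pi/\alpha}$, which carries no dependence on $m$, establishing \eqref{eq:EsNakaZnM} and the claimed independence of the Nakagami-$m$ parameter.

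The step I expect to be the main obstacle is justifying the repeated integration by parts, since for $\alpha>2$ one has $2/\alpha<1$ and the derivatives $f^{(k)}(\zeta)$ are singular as $\zeta\to0$ (behaving like $\zeta^{2/\alpha-k}$). The crucial check is that the boundary terms nonetheless vanish: with $s=2/\alpha+k$, at the $j$-th step the boundary contribution at the origin is of order $\zeta^{s-j}\cdot\zeta^{2/\alpha-(k-j)}=\zeta^{4/\alpha}\to0$ uniformly in $j$, and the contribution at infinity is controlled by the stretched-exponential decay of $f$; convergence of the base integral likewise requires $\alpha>2$. Equivalently, one can bypass the boundary bookkeeping by treating $\mathcal{M}[f](s)=\frac{\alpha}{2}c^{-\alpha s/2}\Gamma(\alpha s/2)$ as a meromorphic function and invoking the derivative identity by analytic continuation.
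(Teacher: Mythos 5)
Your derivation is correct, and it reaches \eqref{eq:EsNakaZn} by a genuinely different route from the paper. The paper's proof first expands the $k$-th derivative of $\exp\bigl(-c\zeta^{2/\alpha}\bigr)$ explicitly via a Fa\`a di Bruno--type formula, integrates term by term, and is then left with a triple sum $\tilde{\Delta}_s(m_s,\alpha)=m_s^{-2/\alpha}\sum_{k}\frac{1}{k!}\sum_{l}\sum_{j}(-1)^{k+l+j}\binom{l}{j}\bigl(\tfrac{2}{\alpha}(l-j)\bigr)_{(k)}$ that it collapses to $\Delta_s(m_s,\alpha)$ through a chain of combinatorial identities (the hockey-stick identity for binomial coefficients, the vanishing of $t$-th finite differences of degree-$k$ polynomials for $t>k$, and finally the same Gamma-ratio summation you use). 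Your Mellin-transform argument sidesteps all of that: integrating by parts transfers the $k$ derivatives onto the power $\zeta^{s-1}$ and produces the falling-factorial factor $\Gamma(2/\alpha+k)/\Gamma(2/\alpha)$ directly, so the only summation identity you need is the single Gamma-sum $\sum_{k=0}^{m_s-1}\Gamma(a+k)/k!=\Gamma(a+m_s)/(a\,\Gamma(m_s))$, which is exactly step (e) of the paper's chain. The price is the analytic bookkeeping you correctly flag at the end --- the derivatives $f^{(k)}$ are singular at the origin for $2/\alpha<1$, and your check that every boundary term is $O(\zeta^{4/\alpha})$ there (and stretched-exponentially small at infinity) is the right one and closes the gap; the paper avoids this issue entirely by never integrating by parts, at the cost of the heavier combinatorics. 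The final reduction of $\Delta_s/\Delta_i$ to $\sin(2\pi/\alpha)/(2\pi/\alpha)$ via the reflection formula is identical in both arguments. In short, your proof is shorter and more structural, and I see no error in it.
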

\begin{proof}
  By calculating \eqref{eq:EsNakaNzN} for $\sigma^2=0$, \eqref{eq:EsNakaZn} and \eqref{eq:EsNakaZnM} can be obtained.
  For more details, see Appendix~\ref{app:proof:prop:EsNakaZn}.
  %after applying the derivative of the exponential function and the chain rule to \eqref{eq:EsNakaNzN} for $\sigma^2=0$, \eqref{eq:EsNakaZn} can be derived.
\end{proof}
%
  %If $m_s=m_i=1$, i.e., Rayleigh fading, and $\alpha=4$,

  %Proposition~\ref{prop:EsNakaZn} reveals that $\mathbb{E}\{S\}$ depends on the wireless channel ($m_s$, $m_i$ and $\alpha$) and design parameters ($\xi$, $M$ and $\rho$).
  Because a typical node attempts to discover \emph{any} node rather than a specific node, it is sensible to assume that the wireless fading channel statistics for the desired and interfering signals are the same, i.e., $m_s=m_i$.
  In this regard, it is quite interesting that $\mathbb{E}\{S\}$ in \eqref{eq:EsNakaZnM} does not depend on the Nakagami-$m$ fading parameter, $m_s$ or $m_i$.
  %That is, the performances of the additive white Gaussian noise (AWGN) and Rayleigh and Rician fading chanels are the same.
  This result can be interpreted as the fading effects for the desired and interfering signals being counterbalanced in terms of $\mathbb{E}\{S\}$.
  Furthermore, it is noteworthy that $\mathbb{E}\{S\}$ in \eqref{eq:EsNakaZn} is independent of node density $\lambda$.
  When noise is neglected, the node density only affects the geometric size of the wireless networks and this node density does not change the ratio of distances of the target node and interfering nodes from a typical node.
  That is, the increases or decreases in the desired and interfering signal powers according to the node density cancel each other out in terms of $\mathbb{E}\{S\}$, which is similar to the fading parameter.
  In summary, Proposition~\ref{prop:EsNakaZn} signifies that the fading parameter and node density can be considered as unimportant when $\sigma^2=0$ and $m_s=m_i$.
  Now, the focus is moved to the path loss exponent $\alpha$.
  Note that $\textstyle \frac{\sin(x)}{x}$ is a monotonically decreasing function of $0<x<\pi$.
  Thus, $\mathbb{E}\{S\}$ in \eqref{eq:EsNakaZnM} increases with $\alpha$ when $m_s=m_i$.
  %This is because the larger $\alpha$ is, the more quickly the interfering signals decay.
  The number of nodes further than the target node from a typical node is always significantly more than the closer ones when considering nodes distributed in an infinite two-dimensional space.
  Therefore, the aggregate interference decays more quickly than the power of the desired signal as $\alpha$ increases, and this results in $\mathbb{E}\{S\}$ increasing with $\alpha$.

  %This can be explained as follows: similar to understanding the effect of fading parameters, a typical node discovering peers probes any node rather than a specific node, and the node density only affects the geometric size of wireless networks but does not change the ratio of the distances of a probed node and interfering nodes to a typical node.

%  thus the distance between a typical node and a probed node is scaled together the ones with interfering nodes by
%  the node density is the only factor of determining internode distance that affects the receiving power of interesting and interfering signals.
%  Thus, in the special case of neglecting background noise,

%==============================================================================
\subsection{Design of the Discovery Operation Parameters}\label{ssec:DesignNakaZn}
%==============================================================================

  In \eqref{eq:EsNakaZn}, $\mathbb{E}\{S\}$  is determined by both the wireless channel parameters, e.g., $m_s$, $m_i$, and $\alpha$, and the discovery operation parameters, e.g., $M$, $\xi$, and $\rho$.
  This subsection elaborates on the design of the three discovery operation parameters.
  %In other words, there are the three design parameters that control a random access operation for wireless peer discovery.
  The design aims to maximize $\mathbb{E}\{S\}$ in \eqref{eq:EsNakaZn}.
%  From a simple form of $\mathbb{E}\{S\}$ in Proposition~\ref{prop:EsNakaZn}, an additional insightful result can be obtained.
%  In case of Rayleigh fading, i.e., $m_s=m_i=1$, \eqref{eq:EsNakaZn} is recast into a simpler form as follows:
  It is trivial to derive the optimal value of $\rho$ for maximizing $\mathbb{E}\{S\}$, which is denoted by $\rho^*$, i.e., $\textstyle \rho^* \triangleq \arg \max_{0<\rho<1} \mathbb{E}\{S\}$.
  $\rho^*$ always approaches zero regardless of $M$ and $\xi$.
  Therefore, the design of $M$ and $\xi$ can be separated from that of $\rho$, when $\sigma^2=0$.
  The meaning of $\rho^*$ will be discussed in more detail at the end of this subsection.

  When $\xi$ is given, $\mathbb{E}\{S\}$ in \eqref{eq:EsNakaZn} increases linearly with $M$.
  That is, regarding maximizing $\mathbb{E}\{S\}$ in \eqref{eq:EsNakaZn}, for a fixed $\xi$, $M$ can be designed regardless of the wireless channel parameters and $\rho$.
  However, the design of $M$ is highly correlated with that of $\xi$.
  As mentioned in Section~\ref{ssec:Model}, the value of $\xi$ determines the MCS level, i.e., the data rate, available for a hello packet transmission.
  When the total amount of peer discovery resources are fixed, the data rate decided by $\xi$ affects how many RBs the total resources can be divided into, thus $M$ and $\xi$ should be jointly designed.

\begin{figure}[t]
\centering
\includegraphics[width=9.3cm]{./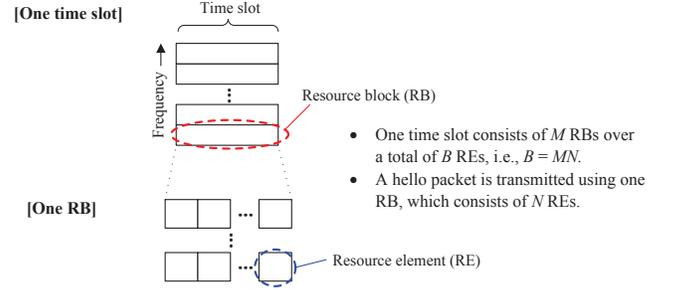}%13cm
\vspace{-0.5cm}
\caption{Resource structure for a hello packet transmission (this figure only shows a time slot consisting of multiple RBs in a frequency domain but it can also be divided into RBs in a time domain, similar to the ones of a frequency domain).}
\label{fig:HelloProtocolResource}
\end{figure}

  The joint design begins with quantifying the relationship between $M$ and $\xi$.
  %Now, there remains the design of $\xi$, and it is not trivial.
  %The success or failure of the discovery of a target node is decided by whether or not the received SINR is above $\xi$.
  %This $\xi$ also determines the modulation order and coding rate, i.e., the data rate, available for a hello packet transmission, and the data rate can be approximated simply as an attenuated form of the Shannon bound, i.e, $\tau = a\log_2(1+\xi)$ bps/Hz where $a$ is an attenuation factor smaller than one \cite{LTE_LinkPerformance}.
  %Note that that a hello packet conveys typically a short size information for identifying a node, e.g., with tens of information bits.
\revbegin
  Typically, the data rate for a finite-length packet is always below the Shannon capacity and a hello packet also conveys small size information for identifying a node, e.g., with tens of information bits.
  The SNR gap approximation provides a useful method for representing the SNR or data rate loss with respect to the Shannon capacity \cite{Slepian63_BoundsCommun,Forney98_ModCodLinearGaussian}.
  Accordingly, the data rate is modeled as this SNR gap approximation, i.e, $\textstyle \tau = \log_2(1+\frac{\xi}{\delta})$ bps/Hz where $\delta$ is the SNR gap and positive.
\revend
  It is noteworthy that as $\xi$ increases, the data rate of a hello packet transmission increases while the success probability given by \eqref{eq:PxDef} decreases.
  In order to quantify this tradeoff, first, the resource structure for the multichannel random hello protocol introduced in Fig.~\ref{fig:HelloProtocolModel} is revisited.  %then the design problem for $\xi$ is formulated.
  Fig.~\ref{fig:HelloProtocolResource} provides a more detailed illustration of the time slot and RB that have been defined in Fig.~\ref{fig:HelloProtocolModel}.
  The resource element (RE) is defined as the basic unit of a wireless resource, and it is assumed that one time slot consists a total of $B$ REs and a hello packet has a fixed length of $L$ bits.
  In order to convert the data rate unit into a more convenient one, consider $\tilde{\tau}=t\tau$ bits/RE for a positive constant $t$.
  Then, the number of REs for a hello packet transmission is given by $\textstyle N = \frac{L}{\tilde{\tau}}$.
\revbegin
  The size of one RB depends on $\tilde{\tau}$ or $\tau$, and the number of available RBs per time slot is given by\footnote{In practice, the parameters such as $B$, $N$, $M$, and $L$ are positive integers, but this paper relaxes the integer constraints for analytical convenience. That is, this paper allows that they are positive real numbers.}
\begin{IEEEeqnarray}{ll}\label{eq:Mxi}
  M & = \frac{B}{N} = \frac{B}{L/t\tau} = \frac{tB}{L} \log_2 (1+\frac{\xi}{\delta}) = \beta \log (1+\frac{\xi}{\delta}),
\end{IEEEeqnarray}
  where $\textstyle \beta \triangleq \frac{tB}{L\log2}$ is a constant.
  Hence, $\mathbb{E}\{S\}$ in \eqref{eq:EsNakaZn} is expressed as follows:
\begin{IEEEeqnarray}{ll}\label{eq:EsXi}
  \mathbb{E}\{S\} = \left(\frac{\Delta_s\left(m_s,\alpha \right)}{\Delta_i\left(m_i,\alpha \right)}(1-\rho)\beta\right) \frac{\log(1+\frac{\xi}{\delta})}{\xi^{\frac{2}{\alpha}}}.
\end{IEEEeqnarray}
  This $\mathbb{E}\{S\}$ can be maximized over $\xi>0$ by finding the optimal solution to maximize $\textstyle f_{\xi}(x) \triangleq x^{-\frac{2}{\alpha}} \log(1+\frac{x}{\delta})$.
  That is, $\xi^*=x^*$ where $\textstyle \xi^* \triangleq \arg \max_{\xi>0} \mathbb{E}\{S\}$ and $\textstyle x^* = \arg \max_{x>0} f_{\xi}(x)$.
\begin{proposition}\label{prop:Fxi}
  Function $f_\xi(x)$ on the domain of $\textstyle \{x|x>0\}$ is a unimodal function and has the maximum value at the unique solution of $u_\xi(x)=0$ for $x>0$ where $\textstyle u_\xi(x) \triangleq \frac{\alpha}{2}\frac{x}{\delta} - \left(1+\frac{x}{\delta}\right)\log(1+\frac{x}{\delta})$.
  Moreover, the optimal solution, i.e., $x^*$, increases with $\alpha$ for a fixed $\delta$.
\end{proposition}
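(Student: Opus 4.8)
The plan is to reduce the maximization of $f_\xi$ to a sign analysis of the single auxiliary function $u_\xi$. First I would differentiate $f_\xi(x)=x^{-2/\alpha}\log(1+x/\delta)$ and factor out the strictly positive term $x^{-2/\alpha-1}$, obtaining
\[
f_\xi'(x)=x^{-2/\alpha-1}\left[\frac{x/\delta}{1+x/\delta}-\frac{2}{\alpha}\log\!\left(1+\frac{x}{\delta}\right)\right].
\]
Multiplying and dividing the bracketed factor by $\frac{2}{\alpha}\bigl(1+x/\delta\bigr)>0$ rewrites it as $\frac{2}{\alpha(1+x/\delta)}\,u_\xi(x)$, so that the sign of $f_\xi'(x)$ coincides with the sign of $u_\xi(x)$ for all $x>0$. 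Consequently, proving unimodality and locating the maximizer reduces entirely to counting the positive roots of $u_\xi$ and tracking the sign pattern of $u_\xi$ across them.

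Next I would analyze $u_\xi$ through the substitution $y=x/\delta>0$, writing $u_\xi=\frac{\alpha}{2}y-(1+y)\log(1+y)$ with $u_\xi|_{y=0}=0$. Differentiating in $y$ gives $\frac{\alpha}{2}-1-\log(1+y)$, which equals $\frac{\alpha}{2}-1>0$ at $y=0$ (this is exactly where the hypothesis $\alpha>2$ enters) and is strictly decreasing, since the second derivative $-\frac{1}{1+y}$ is negative; thus $u_\xi$ is concave in $y$. Hence $u_\xi$ rises from $0$ with positive slope, attains a single interior maximum at $y=e^{\alpha/2-1}-1>0$, and then decreases to $-\infty$ as $y\to\infty$. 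By concavity together with $u_\xi|_{y=0}=0$, the function has exactly one positive root, and it lies strictly beyond the interior peak, with $u_\xi>0$ on $(0,y^*)$ and $u_\xi<0$ on $(y^*,\infty)$. Translating back via $x^*=\delta y^*$, the derivative $f_\xi'$ is positive then negative, so $f_\xi$ is unimodal with its unique maximum at $x^*$, the unique positive solution of $u_\xi(x)=0$.

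For the monotonicity of $x^*$ in $\alpha$, I would apply the implicit function theorem to $u_\xi(x^*;\alpha)=0$. The partial derivatives are $\frac{\partial u_\xi}{\partial\alpha}=\frac{x}{2\delta}>0$ and $\frac{\partial u_\xi}{\partial x}=\frac{1}{\delta}\bigl[\frac{\alpha}{2}-1-\log(1+x/\delta)\bigr]$, so that $\frac{dx^*}{d\alpha}=-\frac{\partial u_\xi/\partial\alpha}{\partial u_\xi/\partial x}$. The decisive point, inherited from the previous paragraph, is that $x^*$ sits on the decreasing branch of $u_\xi$, where $\frac{\partial u_\xi}{\partial x}$ has already turned negative; with a positive numerator this forces $\frac{dx^*}{d\alpha}>0$. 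I expect this sign determination to be the main obstacle: everything hinges on certifying that the root lies past the interior peak of $u_\xi$ rather than on its increasing branch. Fortunately the concavity of $u_\xi$ in $y$ combined with $u_\xi|_{y=0}=0$ pins this down cleanly, so once the concavity argument is in place both the unimodality claim and the sign of the implicit derivative follow without further calculation.
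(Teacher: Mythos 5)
Your argument is correct, and its core is the same as the paper's: differentiate $f_\xi$, factor out a positive term so that $\operatorname{sign} f_\xi'(x)=\operatorname{sign} u_\xi(x)$, and then show $u_\xi$ changes sign exactly once on $x>0$, from positive to negative. Where you differ is in the packaging of the two remaining steps. For the sign analysis of $u_\xi$, you observe that $u_\xi$ (in the variable $y=x/\delta$) vanishes at $0$, has initial slope $\frac{\alpha}{2}-1>0$, and is concave, which immediately pins down the single crossing past the interior peak; the paper instead splits the domain at $x=\delta(e^{\alpha/2-1}-1)$ and separately certifies positivity near the origin via a L'H\^opital computation of $\lim_{x\to 0^+}\frac{\frac{\alpha}{2}\frac{x}{\delta}}{(1+\frac{x}{\delta})\log(1+\frac{x}{\delta})}=\frac{\alpha}{2}>1$. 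Your concavity argument is cleaner and subsumes that limit. For the monotonicity of $x^*$ in $\alpha$, you apply the implicit function theorem to $u_\xi(x^*;\alpha)=0$, which hinges on the fact (which you correctly supply) that $\partial u_\xi/\partial x<0$ strictly at the root because the root lies strictly beyond the peak; the paper avoids differentiating in $\alpha$ altogether by rewriting the critical equation as $v_\xi(x)=\frac{\alpha}{2}$ with $v_\xi(x)=\left(1+\frac{\delta}{x}\right)\log\left(1+\frac{x}{\delta}\right)$ and showing $v_\xi$ is strictly increasing, so the root is a monotone function of the level $\frac{\alpha}{2}$. Both routes are sound; the paper's gives the dependence on $\alpha$ as a global monotone rearrangement, while yours is the more mechanical local computation and requires the extra (but already established) nondegeneracy of the root.
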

\begin{proof}
  See Appendix~\ref{app:proof:prop:Fxi}.
\end{proof}
  The value of $\xi^*$ determined in Proposition~\ref{prop:Fxi} offers the optimal data rate and number of RBs for the broadcast of a hello packet, i.e., $\textstyle \tau^* = \log_2(1+\frac{\xi^*}{\delta})$ and $\textstyle M^*=\beta \log_2(1+\frac{\xi^*}{\delta})$.
  In Proposition~\ref{prop:Fxi}, $\xi^*$, $\delta$, and $\alpha$ are interestingly related.
  Let $\textstyle \tilde{u}_\xi(y) \triangleq \frac{\alpha}{2} y - \left( 1 + y \right) \log\left( 1 + y \right)$.
  Then, note that $\textstyle \tilde{u}_\xi(\frac{\xi^*}{\delta}) = 0$.
  This implies that $\textstyle \frac{\xi^*}{\delta}$ is only determined by $\alpha$.
  That is, if $\delta$ is scaled down by a factor of $s$, $\xi^*$ also decreases by the same factor, when $\alpha$ is given.
\revend

  %In Proposition~\ref{prop:Fxi}, it is also interesting that $\xi^*$ depends only on $\alpha$.
  %The increase in $\xi^*$ with $\alpha$ results in the double increase in $\mathbb{E}\{S\}$ by  [double increase]
  %As , $\mathbb{E}\{S\}$ is discussed in Section~\ref{ssec:EsNakaZn} [consistent results]

  Now, the focus returns to the optimal $\rho$.
  As mentioned before, the optimal $\rho$ can be decided regardless of the other parameters such as the wireless channel parameters, $M$, and $\xi$.
  It is interesting that $\mathbb{E}\{S\}$ given by \eqref{eq:EsNakaZn} increases linearly as $\rho$ decreases, and it implies that $\rho^* \rightarrow 0$.
  That is, by forcing $\rho$ to be extremely low, the network functions almost without packet collision as if each node transmits its hello packet through contention-free access.
  In this case, the maximum value of $\mathbb{E}\{S\}$ is upper bounded by and approaches $\textstyle %\frac{M}{\Delta_i\left(m_i,\alpha \right) (m_s \xi)^{\frac{2}{\alpha}}} \sum_{k=0}^{m_s-1} \frac{1}{k!} \Delta_k$.
  \frac{\Delta_s\left(m_s,\alpha \right)}{\Delta_i\left(m_i,\alpha \right)}  \frac{M}{\xi^{\frac{2}{\alpha}}}$.
  %However, this imagines that a typical node ideally can discover even a far-off node.
  %In addition, it takes too long time to discover neighbor nodes within a certain distance from a typical node.
  The assumption of zero noise power causes a typical node to ideally discover even far-off nodes.
  However, this is unrealistic when considering that each wireless link has a limited communication coverage due to $\sigma^2 > 0$.
  In Section~\ref{sec:NonzeroNoise}, the effect of a nonzero $\sigma^2$ will be considered.
  %[self-scheduling, but a large delay to discover neighbor nodes within a certain distance]
  %[$\rho^*$ is independent of $\alpha$ and $m$, i.e., fast fading model]

%==============================================================================
\section{Spatial Analysis and Design Principles Considering a Nonzero Noise Power} \label{sec:NonzeroNoise}
%==============================================================================
%==============================================================================
%\subsection{Rayleigh fading channel with $\sigma^2>0$ and $\alpha=4$}\label{ssec:NonzeroNoise}
%==============================================================================

  %[Emphasizing the design of $\rho$ and $p$]
  %[Importance of considering a finite received SNR]

  The nonzero noise power limits the communication range due to the finite SNR, and this limitation may result in different design principles to those of the interference limited scenario presented in Section~\ref{sec:ZeroNoise}.
  This section investigates the performance of wireless peer discovery when the effect of the noise cannot be ignored, and derives useful design principles by approximating the effect of the nonzero noise power.

  For mathematical tractability, the results in this section assume the specific values of the Nakagami-$m$ fading parameter and path loss exponent, i.e., $m_s=m_i=1$ and $\alpha=4$; these assumptions will be relaxed again in Section~\ref{sec:Results}, where the simulation results reveal that the design principles suggested in this section still works well even without these assumptions.
  %Hence, this section does not consider the general channels and focuses only on analyzing the implication of $\sigma^2>0$ with the assumption.

  %where the simulation results addresses that the effect of the Nakagami-$m$ fading parameter is insignificant similar to the interference limited case while that of the path loss exponent depends on the noise power.

%==============================================================================
\subsection{Spatial Performance}\label{ssec:EsNonzeroNoise}
%==============================================================================

  The results for $\sigma^2=0$ derived in Section~\ref{sec:ZeroNoise} clearly reveal the inherent effect of the wireless channels and operation parameters; however, whether or not it is likely that the nonzero noise power changes their effects should be investigated.
  %The result for $\sigma^2=0$ in Section~\ref{sec:ZeroNoise} clearly reveals inherent design principles, but some principles are still somewhat doubtful, e.g., $\rho^* \rightarrow 0$ and the effect of the node density.
  Fortunately, even when $\sigma^2>0$, if $m_s=m_i=1$ and $\alpha=4$, $\mathbb{E}\{S\}$ in \eqref{eq:EsNakaNzN} can be expressed in a simple form.
  %which unveils the effect of $\sigma^2>0$
%
\begin{proposition}\label{prop:EsNzN}
  When all links experience Rayleigh fading, $\sigma^2>0$, and $\alpha=4$, $\mathbb{E}\{S\}$ is given by
\begin{IEEEeqnarray}{ll}\label{eq:EsRaylNzN}
  \hspace{-0.2cm} \mathbb{E}\{S\} &= \frac{\lambda \pi^{\frac{3}{2}} \rho(1-\rho)}{2\sqrt{\xi \sigma^2}} \exp\left(\left(\frac{\lambda \pi^2 \rho}{4M\sigma}\right)^2\right) \mathrm{erfc}\left(\frac{\lambda \pi^2 \rho}{4M\sigma}\right),
\end{IEEEeqnarray}
  where $\textstyle {\rm erfc}(x) = \frac{2}{\sqrt{\pi}}\int_{x}^{\infty} \exp\left(-t^2\right)dt$ is the complementary error function.
\end{proposition}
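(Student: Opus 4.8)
The plan is to specialize the general Nakagami-$m$ expression of Lemma~\ref{lem:SnakaNzN} to the Rayleigh case and then reduce the single remaining integral to a standard Gaussian-type integral that yields the complementary error function. First I would set $m_s=m_i=1$ in \eqref{eq:EsNakaNzN}. Since the outer sum runs from $k=0$ to $m_s-1$, with $m_s=1$ only the $k=0$ term survives, so all higher-order derivatives disappear and the expression collapses to
\[
  \mathbb{E}\{S\} = 2\pi\lambda\rho(1-\rho)\int_0^\infty r\,\exp\!\left(-\frac{\lambda\rho}{M}\pi\zeta^{2/\alpha}\Delta_i(1,\alpha)-\zeta\sigma^2\right)\Big|_{\zeta=\xi r^\alpha}\,dr.
\]

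Next I would insert $\alpha=4$. A short evaluation of the constant gives $\Delta_i(1,4)=\Gamma(\frac12)\Gamma(\frac32)=\pi/2$, and the substitution $\zeta=\xi r^4$ produces $\zeta^{1/2}=\sqrt{\xi}\,r^2$ and $\zeta\sigma^2=\xi\sigma^2 r^4$, so the exponent becomes quadratic in $r^2$. The key step is then the change of variable $u=r^2$ (hence $r\,dr=\frac12\,du$), which turns the expression into $\pi\lambda\rho(1-\rho)\int_0^\infty \exp(-au-bu^2)\,du$ with $a=\frac{\lambda\rho\pi^2\sqrt{\xi}}{2M}$ and $b=\xi\sigma^2$.

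Finally I would evaluate the Gaussian integral by completing the square: writing $bu^2+au=b(u+\frac{a}{2b})^2-\frac{a^2}{4b}$ and substituting $v=\sqrt{b}\,(u+\frac{a}{2b})$ maps the lower limit $u=0$ to $v=\frac{a}{2\sqrt{b}}$, so the integral becomes $\frac{1}{\sqrt{b}}\exp(\frac{a^2}{4b})\int_{a/(2\sqrt{b})}^\infty \exp(-v^2)\,dv=\frac{\sqrt{\pi}}{2\sqrt{b}}\exp(\frac{a^2}{4b})\,\mathrm{erfc}(\frac{a}{2\sqrt{b}})$. Back-substituting $a$ and $b$, the $\sqrt{\xi}$ cancels in the $\mathrm{erfc}$ argument to leave $\frac{a}{2\sqrt{b}}=\frac{\lambda\rho\pi^2}{4M\sigma}$, likewise $\frac{a^2}{4b}=(\frac{\lambda\rho\pi^2}{4M\sigma})^2$, and the prefactor simplifies to $\frac{\lambda\pi^{3/2}\rho(1-\rho)}{2\sqrt{\xi\sigma^2}}$, which reproduces \eqref{eq:EsRaylNzN}. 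I do not anticipate a genuine obstacle; the one point requiring care is the integral with both a linear and a quadratic term, where completing the square and correctly tracking the shifted lower limit is exactly what converts a plain Gaussian into an $\mathrm{erfc}$, together with verifying the clean $\sqrt{\xi}$ cancellation in the $\mathrm{erfc}$ argument.
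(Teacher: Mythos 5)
Your proposal is correct and follows essentially the same route as the paper: specialize Lemma~\ref{lem:SnakaNzN} to $m_s=m_i=1$ so that only the $k=0$ term survives, then reduce to $\int_0^\infty \exp(-(au+bu^2))\,du=\frac{\sqrt{\pi}}{2\sqrt{b}}\exp\left(\frac{a^2}{4b}\right)\mathrm{erfc}\left(\frac{a}{2\sqrt{b}}\right)$, which is precisely the integration formula the paper invokes. The only difference is that you explicitly derive that formula by completing the square (and correctly compute $\Delta_i(1,4)=\pi/2$ and the $\sqrt{\xi}$ cancellation), whereas the paper simply cites it.
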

\begin{proof}
  By substituting $m_s=m_i=1$ into \eqref{eq:EsNakaNzN} and applying the integration formula of $\textstyle \int_0^\infty \exp \left( -(ax+bx^2) \right) dx = \frac{\sqrt{\pi}}{2\sqrt{b}} \exp \left( \frac{a^2}{4b} \right) \mathrm{erfc} \left( \frac{a}{2\sqrt{b}} \right) $ for $a \geq 0$ and $b>0$, \eqref{eq:EsRaylNzN} is derived.
\end{proof}
%
  %because a total amount of interference is determined by the density and transmit power of interferers, i.e., $\textstyle \frac{\lambda \rho}{M}$ and $p$ ($\textstyle \frac{p}{\tilde{\sigma}^2}=\frac{1}{\sigma^2}$), $\mathbb{E}\{S\}$ in \eqref{eq:EsRaylNzN} depends on node density $\lambda$ as well as $\sigma^2 > 0$, unlike \eqref{eq:EsNakaZn} for $\sigma^2=0$.
  It is interesting that $\mathbb{E}\{S\}$ in \eqref{eq:EsRaylNzN} depends on node density $\lambda$, unlike \eqref{eq:EsNakaZn} for $\sigma^2=0$.
  A finite transmit power limits the communication range; thus, a typical node cannot detect the signals of nodes outside this link coverage even when the aggregate interference power is low.
  Therefore, the number of nodes that exist within the link coverage affects the spatial performance of wireless peer discovery.
  It is worth noting that the coverage can be extended by increasing the transmit power, i.e., $p$ or $\textstyle \frac{p}{\tilde{\sigma}^2}=\frac{1}{\sigma^2}$.
  Accordingly, the design issue of $p$ arises, and this will be addressed at the end of the next subsection in detail.

  %Thus, the distribution of internode distance determined by the node density affects $\mathbb{E}\{S\}$.
  %Regarding maximizing $\mathbb{E}\{S\}$ of \eqref{eq:EsRaylNzN}, it increases linearly with $M$ but it seems that $\rho*$ cannot be decided independently.

%==============================================================================
\subsection{Design of the Discovery Operation Parameters}\label{ssec:DesignNonzeroNoise}
%==============================================================================

  In \eqref{eq:EsRaylNzN}, the discovery operation parameters of $M$, $\xi$, $\rho$, and $p$ (or $\textstyle \frac{1}{\sigma^2}$) are closely related to each other.
  Their joint design is optimal for maximizing $\mathbb{E}\{S\}$, but it is intractable.
  Therefore, this subsection elaborates on the impact of an individual parameter on $\mathbb{E}\{S\}$, and the joint optimization is left to a future work.

  %To begin with, the design of $p$ is not considered, that is, it is assumed that $\textstyle \frac{1}{\sigma^2}$ is given.
  %[First, design $M$ and $\rho$ for a fixed $\xi$. The design of $\xi$ is not trivial. So, after forcing the environment to be the interference limited regime by setting the transmit power to be enough big, the design principle of $\xi$ of the previous section is applied.]

  Similar to the case of $\sigma^2=0$, when $\xi$ is fixed, it is clear that $\mathbb{E}\{S\}$ in \eqref{eq:EsRaylNzN} increases with $M$ because the aggregate interference decreases as $M$ increases.
  However, the difference with the result for $\sigma^2=0$ is that $\mathbb{E}\{S\}$ is a saturation function of $M$, i.e., $\textstyle \lim_{M\rightarrow\infty}\mathbb{E}\{S\} = \frac{\lambda \pi^{\frac{3}{2}} \rho(1-\rho)}{2\sqrt{\xi \sigma^2}}$ because $\lim_{x \rightarrow 0} \exp(x) = 1$ and $\lim_{x \rightarrow 0} \mathrm{erfc}(x) = 1$.
  That is, because the interference decreases as $M$ increases but the link coverage remains limited due to $\sigma^2 > 0$, $\mathbb{E}\{S\}$ is eventually saturated.

\revbegin
  It is difficult to derive the optimal value of $\xi$ for $\sigma^2>0$ because it should be designed jointly with $M$.
  For this reason, this paper does not mathematically derive the optimal $\xi$ for maximizing $\mathbb{E}\{S\}$ when $\sigma^2>0$; however, the numerical results in Section~\ref{sec:Results} demonstrate that it is likely that $\mathbb{E}\{S\}$ remains a unimodal function of $\xi$ even when $\sigma^2>0$.
  The analytical optimization of $\xi$ for $\sigma^2>0$ remains as future work.
  %The impact of $\xi$ on the performance is discussed in the next section in more detail.
\revend

  Moreover, $\mathbb{E}\{S\}$ is quite sensitive to $\rho$.
  It is not easy to derive $\rho^*$ directly from \eqref{eq:EsRaylNzN} in order to maximize $\mathbb{E}\{S\}$.
  However, the bounds of $\mathbb{E}\{S\}$ can be given in the form of a fractional function or linear function of $\rho$ from the bounds of $\mathrm{erfc}(x)$, i.e., $\textstyle \frac{1}{\sqrt{\pi}}\frac{2\tau}{1+2\tau^2}\exp(-\tau^2) < \mathrm{erfc}(\tau) < \frac{1}{\sqrt{\pi}} \frac{\exp(-\tau^2)}{\tau}$ for $\tau > 0$, as follows:
\begin{IEEEeqnarray}{l}\label{eq:EsRaylNzNBound}
  %\frac{2M(1-\rho)}{\pi\sqrt{\xi}} \frac{\kappa \rho^2}{1+\kappa \rho^2} < \mathbb{E}\{S\} < \frac{2M(1-\rho)}{\pi\sqrt{\xi}},
  \frac{2}{\pi} \frac{M(1-\rho)}{\sqrt{\xi}} \frac{\kappa \rho^2}{1+\kappa \rho^2} < \mathbb{E}\{S\} < \frac{2}{\pi} \frac{M(1-\rho)}{\sqrt{\xi}},
\end{IEEEeqnarray}
  where $\textstyle \kappa \triangleq \frac{\lambda^2 \pi^4}{8 M^2 \sigma^2}$.
  Note that the upper bound in \eqref{eq:EsRaylNzNBound} is equal to \eqref{eq:EsNakaZnM} when $\alpha=4$.
  This implies that $\mathbb{E}\{S\}$ for $\sigma^2 > 0$ is upper bounded by that for $\sigma^2 = 0$.
  In addition, for a given $\rho > 0$, the lower bound in \eqref{eq:EsRaylNzNBound} becomes increasingly tight for a large $\kappa$, and it approaches the upper bound or that of \eqref{eq:EsNakaZnM} for $\alpha=4$ as $\sigma^2 \rightarrow 0$.
  Therefore, it can be understood that $\textstyle \frac{\kappa \rho^2}{1+\kappa \rho^2}$ of the lower bound in \eqref{eq:EsRaylNzNBound} simply and approximately models the performance degradation that results from the noise power, even if not exactly accurate.
  Unlike case of $\sigma^2=0$, it is expected that $\rho^*$ no longer approaches zero.
  A suboptimal $\rho$, i.e., $\hat{\rho}$, can be obtained in order to maximize the lower bound in \eqref{eq:EsRaylNzNBound} rather than $\rho^*$ for maximizing $\mathbb{E}\{S\}$.
  Let $\textstyle f_\rho(x) \triangleq \frac{x^2 (1-x)}{1+\kappa x^2}$ and $\textstyle \hat{x} \triangleq \arg \max_{0<x<1} f_\rho(x)$, then $\hat{\rho}=\hat{x}$.
  %The following result offers the optimal solution for maximizing $f_\rho(x)$.
  %$\textstyle \hat{\rho} \triangleq \arg \max_{0<x<1} f_\rho(x)$
%
\begin{proposition}\label{prop:Fx}
  Function $f_\rho(x)$ on the domain of $\textstyle \{x|0<x<1\}$ is a unimodal function and has a maximum value at the unique solution of $u_\rho(x)=0$ for $0<x<1$, where $u_\rho(x) = -\kappa x^3 - 3 x + 2$.
\end{proposition}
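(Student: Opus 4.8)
The plan is to analyze the sign of the derivative $f_\rho'(x)$ and to show that it changes sign exactly once on $(0,1)$, which simultaneously delivers unimodality and the characterization of the maximizer in terms of $u_\rho$. First I would differentiate $f_\rho(x) = \frac{x^2(1-x)}{1+\kappa x^2}$ by the quotient rule. Writing the numerator as $N(x) = x^2 - x^3$ and the denominator as $D(x) = 1 + \kappa x^2$, so that $N'(x) = 2x - 3x^2$ and $D'(x) = 2\kappa x$, the derivative is $f_\rho'(x) = \frac{N'(x)D(x) - N(x)D'(x)}{D(x)^2}$.

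The key step is to simplify the numerator $N'(x)D(x) - N(x)D'(x)$. Expanding $(2x - 3x^2)(1+\kappa x^2) - (x^2 - x^3)(2\kappa x)$ and collecting terms, the $\kappa x^3$ contributions cancel and the remainder factors as $x\bigl(2 - 3x - \kappa x^3\bigr) = x\,u_\rho(x)$. Hence $f_\rho'(x) = \frac{x\,u_\rho(x)}{(1+\kappa x^2)^2}$. Since $x > 0$ and $(1+\kappa x^2)^2 > 0$ on $(0,1)$, the sign of $f_\rho'(x)$ coincides with the sign of $u_\rho(x)$ throughout the domain, so the whole question reduces to the behavior of $u_\rho$.

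Next I would establish that $u_\rho(x) = 2 - 3x - \kappa x^3$ possesses a unique root in $(0,1)$ across which it switches from positive to negative. Because $\kappa = \frac{\lambda^2\pi^4}{8M^2\sigma^2} > 0$, its derivative $u_\rho'(x) = -3 - 3\kappa x^2 = -3(1+\kappa x^2)$ is strictly negative for every $x$, so $u_\rho$ is strictly decreasing. Together with the boundary values $u_\rho(0) = 2 > 0$ and $u_\rho(1) = -1 - \kappa < 0$, the intermediate value theorem yields a unique $\hat{x} \in (0,1)$ with $u_\rho(\hat{x}) = 0$, and monotonicity forces $u_\rho > 0$ on $(0,\hat{x})$ and $u_\rho < 0$ on $(\hat{x},1)$.

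Combining the two observations, $f_\rho'(x) > 0$ for $x \in (0,\hat{x})$ and $f_\rho'(x) < 0$ for $x \in (\hat{x},1)$, so $f_\rho$ strictly increases and then strictly decreases; this is exactly unimodality with a global maximum attained at $\hat{x}$, the unique zero of $u_\rho$ in $(0,1)$. I expect the only genuine obstacle to be the bookkeeping in the numerator expansion, namely verifying that the $\kappa x^3$ terms cancel so that the numerator collapses to the clean factor $x\,u_\rho(x)$; once that cancellation is confirmed, the remaining monotonicity argument for $u_\rho$ is immediate.
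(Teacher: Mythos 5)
Your proposal is correct and follows essentially the same route as the paper: both compute $f_\rho'(x) = \frac{x}{(1+\kappa x^2)^2}u_\rho(x)$, reduce the sign analysis to $u_\rho$, and use $u_\rho(0)=2>0$, $u_\rho(1)=-\kappa-1<0$, and $u_\rho'(x)=-3\kappa x^2-3<0$ to obtain a unique sign change. The algebraic cancellation you flag as the only obstacle does go through, yielding exactly the numerator $x\,u_\rho(x)$ appearing in the paper's proof.
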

\begin{proof}
  The first order derivative of $f_\rho(x)$ with respect to $x$ is given by
\begin{IEEEeqnarray}{ll}\label{eq:PnDerivative}
  \frac{d f_\rho(x)}{dx} &= \frac{x}{(1+\kappa x^2)^2} u_\rho(x).
\end{IEEEeqnarray}
  For $0<x<1$, because $\textstyle \frac{x}{(1+\kappa x^2)^2} > 0$, the sign of $\textstyle \frac{d f_\rho(x)}{dx}$ is only determined by $u_\rho(x)$.
  Note that $u_\rho(0)=2$, $u_\rho(1)=-\kappa-1<0$, and $\textstyle \frac{du_\rho(x)}{dx} = -3\kappa x^2 - 3 < 0$.
  That is, because $u_\rho(x)$ is monotonically decreasing, $u_\rho(x)>0$ for $0<x<\hat{x}$ while $u_\rho(x)<0$ for $\hat{x} < x<1$.
  Also, $u_\rho(\hat{x})=0$.
  Therefore, $f_\rho(x)$ is monotonically increasing for $0<x \leq \hat{x}$ and monotonically decreasing for $\hat{x} \leq x<1$, and $\hat{x}$ becomes the optimal solution to maximize $f_\rho(x)$ on the interval of $0<x<1$.
\end{proof}
  Interestingly, this $\hat{\rho}$ is closely related to the environmental factors including $\lambda$, $\sigma^2$, and $M$, because it depends on $\kappa$, and this design of $\rho$ differs significantly from that for $\sigma^2=0$ in Section~\ref{ssec:DesignNakaZn}.

  Until now, the design of the three parameters, i.e., $M$, $\xi$, and $\rho$, was addressed for a given $\textstyle \sigma^2=\frac{\tilde{\sigma}^2}{p}$.
  As another design method, it can be considered that $p$ is set to such a large value that the aggregate interference dominates the noise power.
  %Of course, if $p$ or $\textstyle \frac{p}{\tilde{\sigma}^2}=\frac{1}{\sigma^2}$ has a large value, the interference limited environment can be made.
  %This work suggests a design method that force the aggregate interference to dominate the noise power while keeping $p$ as small as possible.
  This work suggests a design method for a transmit power that can suppress the effect of the noise power and be kept as small as possible, similar to \cite{Kwon13_SG_RandomDataCollector}.
  The key is that, in \eqref{eq:EsRaylNzNBound}, the lower bound is forced to approach the upper bound by designing $p$ that makes $\kappa\rho^2 \gg 1$.
  From $\textstyle \kappa \rho^2 = \frac{\lambda^2 \pi^4 \rho^2}{8 M^2 \sigma^2} = c \gg 1$ for a certain large value of $c$, $p$ can be set to $\textstyle \hat{p} \triangleq \frac{8c}{\pi^4}\left(\frac{\lambda \rho}{M}\right)^{-2} \tilde{\sigma}^2$. %$\textstyle \frac{8M^2 c}{\lambda^2 \pi^4 \rho^2}$
  It is worth noting that $\hat{p}$ is a decreasing function of the interferer density, i.e., $\textstyle \frac{\lambda \rho}{M}$.
  That is, in this design, the noise power is dominated by the aggregate interference power.
  $\mathbb{E}\{S\}$ using this $\hat{p}$ approaches that for $\sigma^2=0$, which eventually facilitates the application of the design principles for $M$, $\xi$, and $\rho$ addressed in Section~\ref{ssec:DesignNakaZn}.

\revbegin

%==============================================================================
\section{The Impact of Shadowing on Wireless Peer Discovery} \label{sec:Shadowing}
%==============================================================================

  This section discusses the impact of the shadowing on the spatial performance of wireless peer discovery and extends the design principles derived in previous sections to the ones for wireless channel models that incorporate arbitrary shadowing.
  %Sections~\ref{sec:ZeroNoise} and \ref{sec:NonzeroNoise}

  When considering the link from node $j$ to a typical node under shadowing and Nakagami-$m$ fading modeled by $\vartheta_j$ and $\tilde{h}_j$, respectively, the received power can be written using $\textstyle p \tilde{h}_j \vartheta_j |X_j|^{-\alpha} = p \tilde{h}_j |\vartheta_j^{-\frac{1}{\alpha}} X_j|^{-\alpha}$.
  In this statement, $\textstyle \vartheta_j^{-\frac{1}{\alpha}} X_j$ can be interpreted as randomly and independently displacing nodes of $\Phi$ to a new location according to their shadowing \cite{Dhillon13_DlHetnetGen}.
  This paper assumes that $\{\vartheta_j\}$ are i.i.d.
  With a slight misuse of notation, $\vartheta$ denotes a random variable representing the shadowing component of the links.
  From Lemma~1 in \cite{Dhillon13_DlHetnetGen}, which follows from the displacement theorem \cite[Theorem~1.3.9]{Baccelli10_StochasticGeomNow}, if $\mathbb{E}\{\vartheta^{\frac{2}{\alpha}}\} < \infty$, the i.i.d. shadowing effect is equivalent to the transformation of original PPP $\Phi$ with density $\lambda$ into new homogeneous PPP $\textstyle \Phi^{(\vartheta)}$ with density $\textstyle \lambda^{(\vartheta)}$, which is given by
\begin{IEEEeqnarray}{ll}\label{eq:DensityShadowing}
  \lambda^{(\vartheta)} \triangleq \lambda \mathbb{E}\{\vartheta^{\frac{2}{\alpha}}\}.
\end{IEEEeqnarray}

  This concept enables the investigation of the shadowing effect through only analyzing the effect of $\lambda$ on the performance.
  This results in the following interpretation under an interference-limited scenario, i.e., when $\sigma^2=0$.
\begin{remark}\label{rmk:ShadowingZeroNoise}
  When $\sigma^2=0$, the average number of successfully discovered nodes, i.e., $\mathbb{E}\{S\}$, is invariant to the shadowing distribution.
\end{remark}
\begin{proof}
  $\mathbb{E}\{S\}$ given in \eqref{eq:EsNakaZn} is independent of $\lambda$.
  Therefore, the shadowing effect described by \eqref{eq:DensityShadowing} does not affect the performance.
\end{proof}
  Remark~\ref{rmk:ShadowingZeroNoise} indicates that the design principles for $M$, $\rho$, and $\xi$ described in Section~\ref{sec:ZeroNoise} do not depend on the shadowing distribution, when $\sigma^2=0$.
  In contrast, when the noise power cannot be neglected, the shadowing impact on the performance is revealed from the following results.
\begin{remark}\label{rmk:ShadowingNonzeroNoise}
  When all links experience the Rayleigh fading, $\sigma^2>0$, and $\alpha=4$, $\mathbb{E}\{S\}$ increases with $\mathbb{E}\{\vartheta^{\frac{2}{\alpha}}\}$.
  In addition, as $\mathbb{E}\{\vartheta^{\frac{2}{\alpha}}\}$ increases, $\mathbb{E}\{S\}$ approaches that of the interference limited case, i.e., $\textstyle \frac{2}{\pi}\frac{M(1-\rho)}{\sqrt{\xi}}$, which is an upper bound of $\mathbb{E}\{S\}$ for $\sigma^2>0$                                    .
\end{remark}
\begin{proof}
  The increase in $\mathbb{E}\{S\}$ with $\mathbb{E}\{\vartheta^{\frac{2}{\alpha}}\}$ can be demonstrated through proving that $\textstyle \frac{\partial \mathbb{E}\{S\}}{\partial \lambda} > 0$, where $\mathbb{E}\{S\}$ is given in \eqref{eq:EsRaylNzN}.
\begin{IEEEeqnarray}{ll}\label{eq:SavgDiffLambda}
  \hspace{-0.4cm} \frac{\partial \mathbb{E}\{S\}}{\partial \lambda} & = c \left( 1 + 2d^2 \lambda^2 \right) \exp \left( d^2\lambda^2 \right) \mathrm{erfc}\left( d\lambda \right) - \frac{2cd}{\sqrt{\pi}}\lambda \IEEEnonumber \\
  & \stackrel{\mathrm{(a)}}{>}  c \left( 1 + 2d^2 \lambda^2 \right) \frac{1}{\sqrt{\pi}} \frac{2d\lambda}{1 + 2d^2\lambda^2} - \frac{2cd}{\sqrt{\pi}}\lambda = 0,
\end{IEEEeqnarray}
  where $\textstyle c \triangleq \frac{\pi^{\frac{3}{2}} \rho(1-\rho)}{2\sqrt{\xi\sigma^2}}$, $\textstyle d \triangleq \frac{\pi^2 \rho}{4 M \sigma}$,
    %(a) follows from the formula of $\textstyle \frac{d}{dz} \mathrm{erfc}(z) = - \frac{2}{\sqrt{\pi}} \exp\left( -z^2 \right)$,
    and (a) follows from $\mathrm{erfc}(\tau) > \textstyle \frac{1}{\sqrt{\pi}}\frac{2\tau}{1+2\tau^2}\exp(-\tau^2)$. % $\mathrm{erfc}(\tau) < \frac{1}{\sqrt{\pi}} \frac{\exp(-\tau^2)}{\tau}$
  As $\lambda$ increases, the lower bound of $\mathbb{E}\{S\}$ in \eqref{eq:EsRaylNzNBound} approaches the upper bound in \eqref{eq:EsRaylNzNBound}, which is $\mathbb{E}\{S\}$ for $\sigma^2=0$; thus, $\mathbb{E}\{S\}$ approaches $\textstyle \frac{2}{\pi}\frac{M(1-\rho)}{\sqrt{\xi}}$ as $\mathbb{E}\{\vartheta^{\frac{2}{\alpha}}\}$ increases.
\end{proof}
  When $\mathbb{E}\{\vartheta^{\frac{2}{\alpha}}\} > 1$, effective density $\lambda^{(\vartheta)}$ in \eqref{eq:DensityShadowing} is larger than original density $\lambda$, and this results in reducing the effect of the nonzero noise power due to the increase in the effective node density.
  That is, shadowing may cause the operation of wireless peer discovery to be closer to that of an interference-limited scenario.
  By replacing $\lambda$ with $\lambda^{(\vartheta)}$, the design principles in Section~\ref{ssec:DesignNonzeroNoise} also work well under shadowing.

  Recall that Remarks~\ref{rmk:ShadowingZeroNoise} and \ref{rmk:ShadowingNonzeroNoise} are applicable to an arbitrary distribution of shadowing component $\vartheta$.
  However, the shadowing component is most commonly modeled as lognormal, i.e., $\theta$ such that $\theta \triangleq 10^{\frac{\vartheta}{10}}$ can be represented as a normal random variable with a zero mean and standard deviation $\chi$.
  In this case, $\textstyle \mathbb{E}\{\vartheta^{\frac{2}{\alpha}}\} = \exp\left( \frac{1}{2}\left( \frac{\log 10}{5} \frac{\chi}{\alpha} \right)^2 \right)$.
  That is, $\mathbb{E}\{\vartheta^{\frac{2}{\alpha}}\}$ increases with $\chi$, and $\mathbb{E}\{\vartheta^{\frac{2}{\alpha}}\} > 1$ because $\chi > 0$.
  From Remark~\ref{rmk:ShadowingNonzeroNoise}, this signifies that $\mathbb{E}\{S\}$ increases with $\chi$ and the lognormal shadowing always leads to $\mathbb{E}\{S\}$ larger than that without shadowing when the effect of the noise power cannot be ignored.
  Recall that Remark~\ref{rmk:ShadowingNonzeroNoise} assumes a specific channel model, i.e., the Rayleigh fading and $\alpha=4$.
  In the next section, simulation results demonstrate that this property remains under other channel models, i.e., the Nakagami-$m$ fading and general $\alpha$.
\revend

%==============================================================================
\section{Numerical Results} \label{sec:Results}
%==============================================================================

  %SIR cdf (SIR vs. cdf): analytical results, simulation results, exact vs. silence results.

  This section evaluates and discusses the spatial performance of a multichannel random hello protocol based on the results derived in Sections~\ref{sec:ZeroNoise} and \ref{sec:NonzeroNoise}.
  As mentioned in Section~\ref{ssec:Model}, $\textstyle \frac{p}{\tilde{\sigma}^2}=\frac{1}{\sigma^2}$ is the average received SNR when a target node is at a unit distance from a typical node, i.e., $|X_i|=1$ in \eqref{eq:Sinr}; hereafter, \texttt{snr} signifies the average received SNR at the unit distance.
  It is assumed that nodes are spatially distributed according to a homogeneous PPP.
  Node density $\lambda$ is measured as the average number of nodes within a unit area and is set to $4$, if not stated otherwise.
  For this value, the average distance between nodes is $\textstyle 1/(2\sqrt{\lambda})=0.25$ \cite{Baccelli10_StochasticGeomNow}.

\begin{figure}
\centering
\includegraphics[width=9cm]{./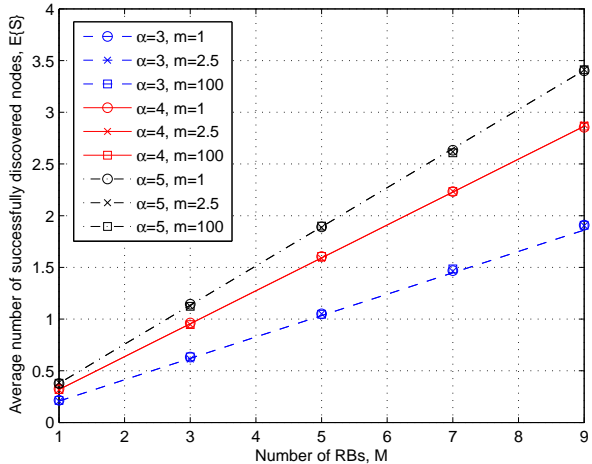}%11cm
\vspace{-0.5cm}
\caption{$\mathbb{E}\{S\}$ vs. $M$ when $\sigma^2=0$ ($\xi=0$dB, $\rho=0.5$; lines: analysis results, symbols: simulation results).} \label{fig:ResultSvsMzN}
\end{figure}
\begin{figure}
\centering
\includegraphics[width=9cm]{./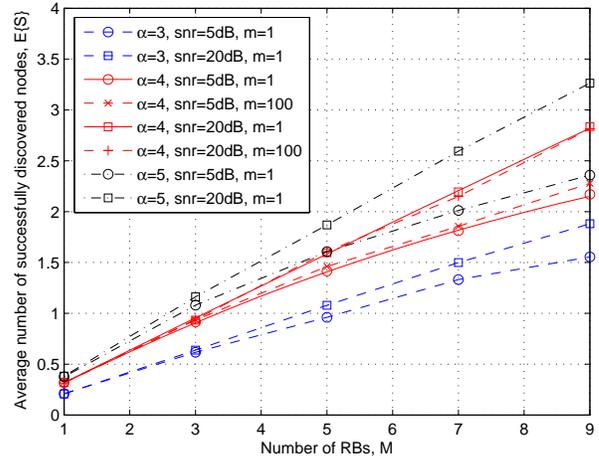}%11cm
\vspace{-0.5cm}
\caption{$\mathbb{E}\{S\}$ vs. $M$ when $\sigma^2>0$ ($\xi=0$dB, $\rho=0.5$; note that, in the cases with $\alpha=4$ \& $m=1$, lines and symbols denote the analysis and simulation results, respectively. In the other cases, symbolled lines only represent the simulation results.).} \label{fig:ResultSvsMNnzN}
\end{figure}
%

  %[The first three figures are about the effect of wireless channels for investigating the basic performance.]

  Fig.~\ref{fig:ResultSvsMzN} presents the effects of the number of RBs $M$, wireless fading channel parameterized by $m$, and path loss exponent $\alpha$ on $\mathbb{E}\{S\}$ under the Nakagami-$m$ fading channel models with $m_s=m_i=m$, when $\sigma^2=0$.
  As already expected, $\mathbb{E}\{S\}$ increases with $M$ and $\alpha$, and it does not depend on the value of the fading parameter $m$.
  This figure also depicts that the analysis results coincide precisely with the simulation results.
  The result in Proposition~\ref{prop:EsNakaZn} only provides the result for integer $m$ while Fig.~\ref{fig:ResultSvsMzN} demonstrates that $\mathbb{E}\{S\}$ also remains independent of $m$ with a non-integer value.

  In order to observe the effect of the noise power, {\rev the finite \texttt{snr}} is considered in Fig.~\ref{fig:ResultSvsMNnzN}.
  This figure reveals clearly that $\mathbb{E}\{S\}$ tends to be saturated rather than continuously increasing as $M$ increases when \texttt{snr} is low, i.e. $5$dB.
  For example, when $\alpha=4$, $\sigma^2=-5$dB, and $m=1$, $\mathbb{E}\{S\}$ eventually approaches $\textstyle \frac{\lambda \pi^{\frac{3}{2}} \rho(1-\rho)}{2\sqrt{\xi \sigma^2}} = 4.3503$ when $M$ is $1000$ and $\rho$ is $0.5$.
  The simulation results in Fig.~\ref{fig:ResultSvsMNnzN} also demonstrate the effect of $m$ and $\alpha$ with values other than $m=1$ and $\alpha=4$ that are assumed in Section~\ref{sec:NonzeroNoise}.
  %unlike the mathematical results in Section~\ref{ssec:NonzeroNoise}.
  In this figure, it is observed that the effect of $m$ remains insignificant and $\mathbb{E}\{S\}$ increases with $\alpha$, which is similar to the interference limited case.
%  In this figure, it is observed that $\mathbb{E}\{S\}$ is $m$ but the gap is insignificant, and the
%  By contrast, the effect of $\alpha$ is quite different from the one of $\sigma^2=0$.
%  In interference dominant cases, e.g., the cases of $M=1,2,3$ for $\sigma^2=-100$dB and $m=1$ in Fig.~\ref{fig:ResultSvsMNnzN}, $\mathbb{E}\{S\}$ increases with $\alpha$ like the one for $\sigma^2=0$ while it rather decreases as $\alpha$ increases in other cases.
%  That is, it implies that in situations with non-neglectable noise power, the decrease in the desired signal power due to a large path loss exponent affects the performance more than the decrease in the interference power.

%
\begin{figure}[t]
\centering
\includegraphics[width=9cm]{./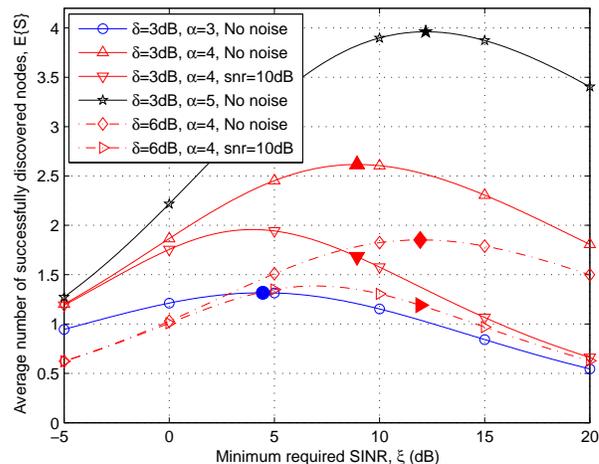}%11cm
\vspace{-0.5cm}
\caption{$\mathbb{E}\{S\}$ vs. $\xi$ under the Rayleigh fading environment ($\rho=0.5$; closed symbols represent $\xi^*$ derived in Proposition~\ref{prop:Fxi}).} \label{fig:ResultSvsXidB}
\end{figure}

\revbegin
  %[The range of $\beta$ in practical systems, e.g., the 3GPP LTE systems.]
  Fig.~\ref{fig:ResultSvsXidB} presents the performance gains that the design of $\xi$ suggested in Section~\ref{ssec:DesignNakaZn} enables.
  In order to determine the value of $\xi$, $\beta$ defined as $\textstyle \frac{tB}{L\log2}$ in Section~\ref{ssec:DesignNakaZn} should be chosen appropriately.
  When considering the uplink resource structure of the 3GPP LTE system \cite{LTE_PhyChannel,LTE_UeTxRx} and assuming that $L=70\,$bits \cite{Bacelli12_SG_D2Ddiscovery}, $\beta$ ranges from approximately $10$ to $160$ depending on the available bandwidth, and therefore the evaluation assumes that $\beta=10$.
  These results demonstrate that $\mathbb{E}\{S\}$ for $\sigma^2=0$ is maximized at the value of $\xi$ derived in Proposition~\ref{prop:Fxi}, which increases with $\alpha$ for a given $\delta$.
  This figure also depicts that $\xi^*$ increases with the SNR gap under a fixed $\alpha$.
  As stated in Section~\ref{ssec:DesignNakaZn}, it should be noted that, when $\sigma^2 = 0$ and $\alpha = 4$, $\xi^*$'s for $\delta = 6\,$dB and $\delta = 3\,$dB differ precisely by the difference of $\delta$ values, i.e., $3\,$dB.
  In contrast, it is observed that $\xi$ required in order to maximize $\mathbb{E}\{S\}$ for a finite \texttt{snr} is somewhat smaller than $\xi^*$ derived in Proposition~\ref{prop:Fxi}.
  Even though $\xi^*$ derived in Proposition~\ref{prop:Fxi} does not provide a bad performance, it might not be satisfactory to apply this $\xi^*$ when the SNR is not high.
  However, as suggested in Section~\ref{ssec:DesignNonzeroNoise}, if the transmit power can be appropriately increased, it is expected that this $\xi^*$ can work sufficiently well.
\revend

\begin{figure}[t]
\centering
\includegraphics[width=9cm]{./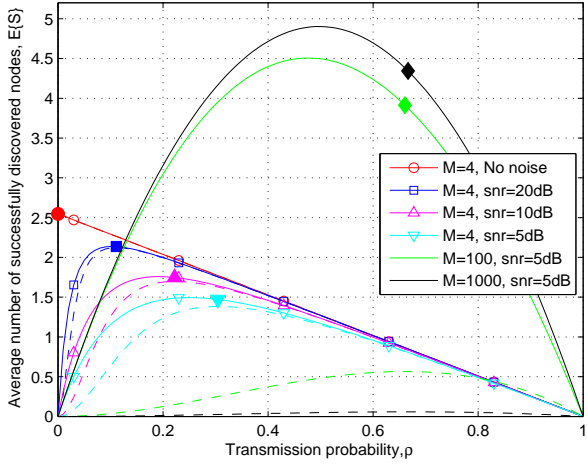}%11cm
\vspace{-0.5cm}
\caption{$\mathbb{E}\{S\}$ vs. $\rho$ under the Rayleigh fading environment ($\alpha=4$, $\xi=0$dB; open symbolled solid lines: exact values of $\mathbb{E}\{S\}$, dashed lines: lower bound of $\mathbb{E}\{S\}$ in \eqref{eq:EsRaylNzNBound}, closed symbols: $\hat{\rho}$ derived in Proposition~\ref{prop:Fx}).} \label{fig:ResultSvsRho}
\end{figure}

  Fig.~\ref{fig:ResultSvsRho} demonstrates how the transmission probability, i.e., $\rho$, affects $\mathbb{E}\{S\}$ and how well the suboptimal design of $\rho$ proposed in this paper functions under the Rayleigh fading environment ($m_s=m_i=1$).
  It is observed that the optimal $\rho$ for maximizing $\mathbb{E}\{S\}$, i.e., $\rho^*$, increases with $\sigma^2$.
  This increases the likelihood of packets with a high received SNR by allowing more nodes to transmit rather than only focusing on reducing interference.
  When $\sigma^2$ is low, $\mathbb{E}\{S\}$ is very sensitive to $\rho$ on the interval of $0<\rho<\rho^*$; thus, the selection of $\rho$ has a profound effect on the performance.
  This observation stresses the importance of considering the noise power effect in the design of $\rho$.
  The results in Fig.~\ref{fig:ResultSvsRho} also demonstrate that {\rev $\hat{\rho}$ obtained in Proposition~\ref{prop:Fx} tracks $\rho^*$ very well even when the lower bound of $\mathbb{E}\{S\}$ in \eqref{eq:EsRaylNzNBound} becomes increasingly loosed as the impact of the noise power increases, e.g., for a low SNR or large $M$.}
  That is, $\hat{\rho}$  achieves a fairly good balance between the chance of packet transmission and reduction of interference.
  Fig.~\ref{fig:ResultSvsRho} also illustrates that $\rho^*$ approaches $0.5$ for a large $M$.
  In fact, $\rho^*=0.5$ maximizes $\textstyle \lim_{M\rightarrow\infty}\mathbb{E}\{S\}$. %$\textstyle \lim_{M\rightarrow\infty}\mathbb{E}\{S\} = \frac{\lambda \pi^{\frac{3}{2}} \rho(1-\rho)}{2\sqrt{\xi \sigma^2}}$.
  At a large $M$, $\hat{\rho}$ tends to be more than $0.5$.
  This value provides a lower performance than that of a trivial design of $\rho=0.5$ and wastes the node energy due to more transmissions.
  From this, a design method may be considered where $\rho$ is set to $\min\{\hat{\rho},0.5\}$.
  The results also present that a value of $\rho$ ranging from $0.1$ to $0.3$ works moderately well over various \texttt{snr} values.
  Accordingly, in what follows, $\rho=0.2$ will be used for performance evaluations if not stated otherwise.

\begin{figure}[t]
\centering
\includegraphics[width=9cm]{./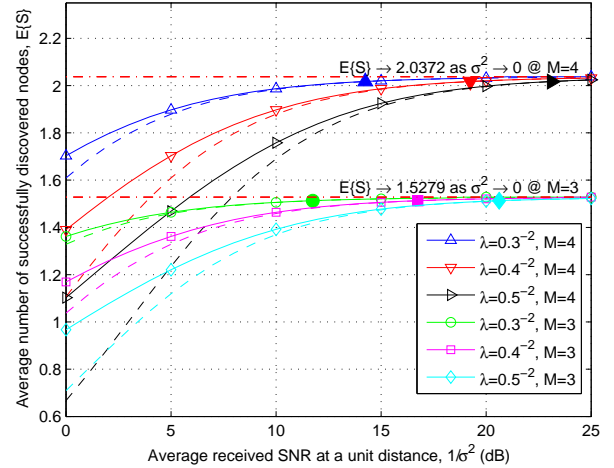}%11cm
\vspace{-0.5cm}
\caption{$\mathbb{E}\{S\}$ vs. \texttt{snr} under the Rayleigh environment ($\alpha=4$, $\xi=0$dB, $\rho=0.2$; open-symbolled solid lines: exact values of $\mathbb{E}\{S\}$, dashed \& dashdot lines: lower \& upper bounds of $\mathbb{E}\{S\}$ in \eqref{eq:EsRaylNzNBound}, closed symbols: \texttt{snr} designed for meeting $\kappa\rho^2=100$).} \label{fig:ResultSvsSnrdB}
\end{figure}

  Fig.~\ref{fig:ResultSvsSnrdB} elaborates on the approximation of the noise power impact based on the lower bound in \eqref{eq:EsRaylNzNBound}, and it validates the design of the transmit power suggested in Section~\ref{ssec:DesignNonzeroNoise}.
  It is observed that the approximation becomes more and more precise as $\textstyle \kappa=\frac{\lambda^2 \pi^4}{8 M^2 \sigma^2}$ increases.
  This is because the noise power is overwhelmed by the aggregate interference that increases with $\lambda/M$ and \texttt{snr}.
  In this figure, the solid symbols denote the performance of the transmit power design suggested in Section~\ref{ssec:DesignNonzeroNoise}, when $c=\kappa\rho^2=100$, and a transmit power is expressed as the average received SNR at a unit distance in the abscissa.
  The results demonstrate that this design gives a good transmit power that can be maintained as small as possible while forcing it into an interference limited environment.

%  and shows that the design method for a transmit power suggested in Section~\ref{sec:NonzeroNoise} works well.
%  As discussed in Section~\ref{sec:NonzeroNoise}, the effect of noise power can be approximated by $\textstyle \frac{\kappa \rho^2}{1+\kappa \rho^2}$ in \eqref{eq:EsRaylNzNBound} under Rayleigh fading with $\alpha=4$.
%  This figure depicts that the approximation becomes more and more exact as $\textstyle \kappa=\frac{\lambda^2 \pi^4}{8 M^2 \sigma^2}$ increases.%\frac{\lambda^2 \pi^4}{8 M^2 \sigma^2}
%  The large $\kappa$ means that the aggregate interference or desired signal power dominates the noise power, i.e., the increase in $\lambda$ and transmit SNR, and decrease in $M$.
%  Fig.~\ref{fig:ResultSvsSnrdB} shows that this energy efficient design for $c=200$ works well.
%  [The 'SNR' in Fig.~\ref{fig:ResultSvsSnrdB} denotes $\textstyle \frac{1}{\sigma^2} = \frac{p}{\tilde{\sigma}^2}$ and it means the received SNR at a distance of $r=1$ from a typical node.]

%
\begin{figure}[ht]
\centering
\includegraphics[width=9cm]{./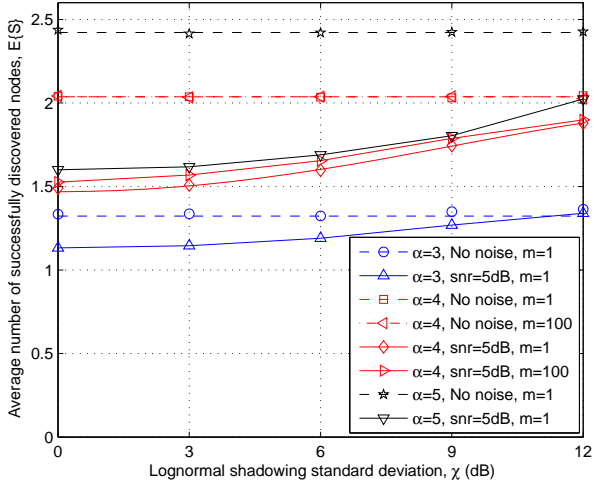}%11cm
\vspace{-0.5cm}
\caption{{\rev The effect of the wireless channels on $\mathbb{E}\{S\}$ when considering the lognormal shadowing ($M=4$, $\xi=0$dB, $\rho=0.2$; note that, in the cases of `No noise' and the case of `$\alpha=4$, \texttt{snr}$=5$dB, $m=1$', lines and symbols denote the analysis and simulation results, respectively. In the other cases, symbolled lines only represent the simulation results).}}
\label{fig:ResultSvsLNdBAnaSim}
\end{figure}

\revbegin
  Fig.~\ref{fig:ResultSvsLNdBAnaSim} presents the effect of the lognormal shadowing on $\mathbb{E}\{S\}$, and the value of $\chi$ on the abscissa denotes the standard deviation of the shadowing in dB scale, e.g., $\chi=0$ indicates no shadowing.
  These results verify the discussion presented in Section~\ref{sec:Shadowing} by demonstrating the coincidence of the analysis and simulation results under the interference limited (i.e., $\sigma^2=0$ and general $\alpha$) and specific nonzero noise (i.e., $\sigma^2>0$, $\alpha=4$, and $m=1$) channels while presenting the simulation results under analytically intractable channels (e.g., $\sigma^2>0$ and $\alpha \neq 4$).
  As elaborated in Section~\ref{sec:Shadowing}, the performance is invariant to $\chi$ when $\sigma^2=0$ while it increases with $\chi$ when $\sigma^2>0$.
  In particular, when $\sigma^2>0$, the lognormal shadowing equivalently increases the node density by a factor of $\textstyle \exp\left( \frac{1}{2}\left( \frac{\log 10}{5} \frac{\chi}{\alpha} \right)^2 \right)$.
  In this regard, the shadowing has more significant impact on the performance for a small $\alpha$, and this phenomenon is observed in Fig.~\ref{fig:ResultSvsLNdBAnaSim}.

  Fig.~\ref{fig:ResultDesignSh} presents the effect of shadowing on the design of the operation parameters.
  The two subfigures extend the results in Figs.~\ref{fig:ResultSvsXidB} and \ref{fig:ResultSvsRho} into the ones that incorporate the lognormal shadowing.
  The lognormal shadowing tends to dilute the impact of the noise power; thus, it is observed that $\xi$ and $\rho$ for maximizing $\mathbb{E}\{S\}$ become closer and closer to those of the interference limited case, as $\chi$ increases.
  In addition, Fig.~\ref{fig:ResultSvsRhoSh} demonstrates that the maximum value of $\mathbb{E}\{S\}$ at $\chi=12$ increases by up to $18$\% when \texttt{snr}$=10$dB while increasing by only $8$\% when \texttt{snr}$=20$dB, compared with the case with not shadowing.
  This observation implies that the impact of shadowing reduces, as $\sigma^2$ decreases or \texttt{snr} increases.
\revend

%==============================================================================
\section{Conclusions} \label{sec:Conclusions}
%==============================================================================

%
\begin{figure*}[ht]
\centerline{
    \subfigure[{\rev Design of $\xi$ ($\rho = 0.5$, $\beta=10$).}]{\includegraphics[width=9cm]{./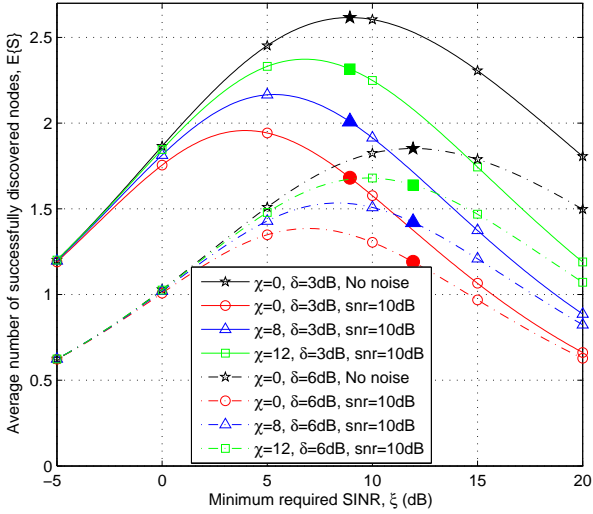}
    \label{fig:ResultSvsXidBSh}}
\hfil
    \subfigure[{\rev Design of $\rho$ ($\xi = 0$dB).}]{\includegraphics[width=9cm]{./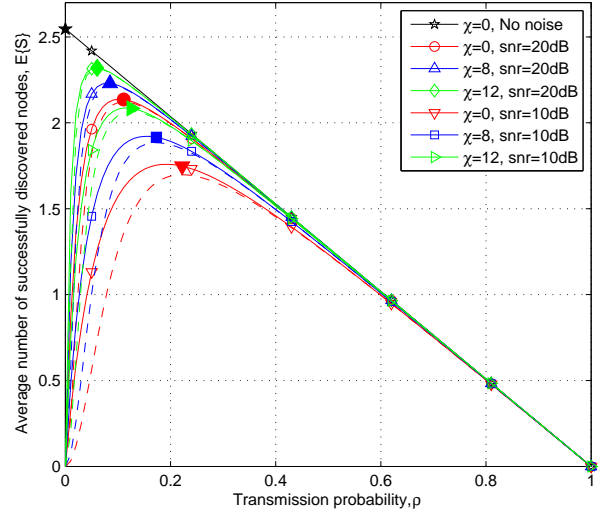}
    \label{fig:ResultSvsRhoSh}}}
    \caption{{\rev The impact of lognormal shadowing on operation parameter design ($M=4$, $m=1$; (a) closed symbols: $\xi^*$ derived in Proposition~\ref{prop:Fxi}; (b) closed symbols: $\hat{\rho}$ derived in Proposition~\ref{prop:Fx}, dashed lines: lower bound of $\mathbb{E}\{S\}$ in \eqref{eq:EsRaylNzNBound}).}}
\label{fig:ResultDesignSh}
\vspace{0cm}
\end{figure*}
%

%  Scope.
%  Observations.
%  Future work: probe proximate node. maximize another metric, e.g $P_n$.

  This paper investigated the performance of a multichannel random hello protocol for wireless peer discovery in terms of the average number of successfully discovered peers, when the peers are spatially distributed according to a homogeneous Poisson point process.
  The performance depends on the wireless channel characteristics, such as the path loss, noise power, fading, and shadowing, as well as the discovery operation characteristics, such as the number of resource blocks, modulation and coding scheme, transmission probability, and transmit power.
  The relationship among these characteristics was expressed or approximated as a closed form, and it was demonstrated that the wireless channel model significantly affects the design of the discovery operation parameters.
  %Correct understanding on the effect of such various factors results is quite important for the design wireless peer discovery,
  Accordingly, incorrect models or assumptions might result in poor designs, e.g., it was observed that an immoderate zero noise assumption for a low SNR might lead a poor design of the transmission probability that degrades the performance.
  The results in this paper can be used as a basis for the design of wireless peer discovery, even though this paper only considered a limited scenario including a simple random access and homogeneous PPP.
  For future study, it would be interesting to extend this work by considering more sophisticated resource management schemes, e.g., interference aware resource allocation.

\appendices

\section{Proof of Lemma~\ref{lem:SnakaNzN}} \label{app:proof:lem:SnakaNzN}
%=============================================================================================

%
\begin{IEEEeqnarray}{ll}
  \hspace{-0.4cm} \mathbb{E}\{S\} & \stackrel{\mathrm{(a)}}{=} M \mathbb{E}\left\{ (1-\rho) \sum_{X_i \in \Phi_{1/M}} P(X_i|Z_0=0) \right\} \nonumber \\
  %& \stackrel{\mathrm{(b)}}{=} M \left( \frac{\lambda}{M} \int_{X\in\mathbb{R}^2} P(X) dX \right) \\
  & \stackrel{\mathrm{(b)}}{=} M \left( (1-\rho)\frac{\lambda}{M} \int_{X\in\mathbb{R}^2} P(X|Z_0=0) dX \right) \nonumber \\
  & \stackrel{\mathrm{(c)}}{=} 2\pi \lambda \rho (1-\rho) \int_0^\infty \Pr \left\{ h > \xi r^\alpha (I + \sigma^2) \right\} r dr, \label{eq:EsGenDrv}
\end{IEEEeqnarray}
  where (a) follows from the fact that a typical node simultaneously listens to target nodes over $M$ RBs, (b) follows from the Campbell theorem and the stationarity of a homogeneous PPP \cite{Stoyan96_StochasticGeom}, and (c) follows from \eqref{eq:Sinr}, \eqref{eq:PxDef}, and the change of variable $|X| \rightarrow r$.
  %Here, $\Lambda(B)$ is the mean number of nodes in the bounded Borel set $B$ \cite{Stoyan96_StochasticGeom}.
  Here, $\mathbb{R}^2$ denotes the two-dimensional Euclidean space.
  %The derivation of \eqref{eq:EsNakaNzN} begins with \eqref{eq:EsGenDrv}.
  %That is, $\textstyle \Pr \left\{ h > \xi r^\alpha (I + \sigma^2) \right\}$ in \eqref{eq:EsGenDrv} should be calculated.
  When the fading power gains of the desired and interfering signals have the ccdfs with fading parameters $m_s$ and $m_i$ described by \eqref{eq:NakaCcdf}, respectively, $\textstyle \Pr \left\{ h > \xi r^\alpha (I + \sigma^2) \right\}$ can be derived similarly to equations (20), (21), and (24) in \cite{Kwon13_SG_RandomDataCollector}, as follows:
\begin{IEEEeqnarray}{l}\label{eq:SinrCcdfNakaDrv}
  \Pr \left\{ h > \xi r^\alpha (I + \sigma^2) \right\} = \sum_{k=0}^{m_s-1} \frac{m_s^k}{k!}(-\xi r^\alpha)^k \cdot \IEEEnonumber \\
    \hspace{2.5cm} \left.\frac{d^k \mathcal{L}_I(\zeta) \exp\left(-\zeta \sigma^2\right)}{d \zeta^k}\right|_{\zeta=m_s \xi r^\alpha},
\end{IEEEeqnarray}
  where $\textstyle \mathcal{L}_I(\zeta) = \exp\left( -\frac{\lambda\rho}{M} \pi\zeta^{\frac{2}{\alpha}} \Delta_i (m_i,\alpha) \right)$.
  Eventually, by plugging \eqref{eq:SinrCcdfNakaDrv} into \eqref{eq:EsGenDrv}, \eqref{eq:EsNakaNzN} is obtained.
\hfill\IEEEQED

%=============================================================================================
\section{Proof of Proposition~\ref{prop:EsNakaZn}} \label{app:proof:prop:EsNakaZn}
%=============================================================================================

  By using the formula of $\textstyle \frac{\partial^k}{\partial z^k} \exp(f(z)) = \exp(f(z)) \sum_{l=0}^{k} \frac{1}{l!} \sum_{j=0}^{l} (-1)^j \binom{l}{j} f(z)^j \frac{\partial^k f(z)^{l-j}}{\partial z^k}$ similar to equation (25) in \cite{Kwon13_SG_RandomDataCollector}, the higher order derivative terms in \eqref{eq:EsNakaNzN} are given by
\begin{IEEEeqnarray}{ll}\label{eq:KthDer}
  \frac{d^k \exp\left( -\frac{\lambda\rho}{M} \phi(\zeta,\alpha) \right)}{d\zeta^k} = \exp\left( -\frac{\lambda \rho}{M}\pi\zeta^{\frac{2}{\alpha}}\Delta_i(m_i,\alpha) \right) \cdot \IEEEnonumber \\
  \hspace{1.5cm}\sum_{l=0}^k \frac{1}{l!} \sum_{j=0}^l (-1)^{l+j} \binom{l}{j} \left( \frac{\lambda \rho}{M}\pi\Delta_i(m_i,\alpha) \right)^l \cdot \IEEEnonumber \\
    \hspace{2.8cm} \left(\frac{2}{\alpha}(l-j)\right)_{(k)} \zeta^{\frac{2}{\alpha}l-k},
\end{IEEEeqnarray}
  where $(x)_{(k)} \triangleq x(x-1)\cdots (x-k+1)$ denotes the Pochhammer symbol.
  Thus, through the integration similar to equation (26) in \cite{Kwon13_SG_RandomDataCollector}, \eqref{eq:EsNakaNzN} is calculated as follows.
\begin{IEEEeqnarray}{ll}\label{eq:EsMint}
  \mathbb{E}\{S\} = \frac{\tilde{\Delta}_s(m_s,\alpha)}{\Delta_i(m_i,\alpha)} \frac{M(1-\rho)}{\xi^{\frac{2}{\alpha}}},
\end{IEEEeqnarray}
  where $\textstyle \tilde{\Delta}_s(m_s,\alpha)$ denotes $\textstyle m_s^{-\frac{2}{\alpha}} \sum_{k=0}^{m_s-1} \frac{1}{k!} \sum_{l=0}^k \sum_{j=0}^{l}\allowbreak (-1)^{k+l+j} \binom{l}{j} \left(\frac{2}{\alpha}(l-j)\right)_{(k)}$.

  Interestingly, a more elegant form of $\textstyle \tilde{\Delta}_s(m_s,\alpha)$ can be found.
  That is, it is shown that $\tilde{\Delta}_s(m_s,\alpha) = \textstyle \Delta_s (m_s,\alpha)$, as follows:
  %Consider the Pochhammer symbol, i.e., $(x)_{(k)} \triangleq x(x-1)\cdots (x-k+1)$.
%
\begin{IEEEeqnarray}{ll}\label{eq:DeltaS}
  %\Delta_s (m_s,\alpha) \triangleq m_s^{-\frac{2}{\alpha}} \sum_{k=0}^{m_s-1} \frac{1}{k!} \sum_{l=0}^k \sum_{j=0}^{l}(-1)^{k+l+j}\cdot$ $\textstyle \binom{l}{j} \prod_{i=0}^{k-1} \left[\frac{2}{\alpha}(l-j)-i\right] \\
  \tilde{\Delta}_s & (m_s,\alpha) \nonumber \\
  %& = m_s^{-\frac{2}{\alpha}} \sum_{k=0}^{m_s-1} \frac{1}{k!} \sum_{l=0}^k \sum_{j=0}^{l}(-1)^{k+l+j} \binom{l}{j} \left(\frac{2}{\alpha}(l-j)\right)_{(k)} \\
  & \stackrel{\mathrm{(a)}}{=} m_s^{-\frac{2}{\alpha}} \sum_{k=0}^{m_s-1} \frac{1}{k!} \sum_{n=0}^k (-1)^{k+n} \left(\frac{2}{\alpha}n\right)_{(k)} \sum_{l=n}^{k} \binom{l}{n} \nonumber \\
  & \stackrel{\mathrm{(b)}}{=} m_s^{-\frac{2}{\alpha}} \sum_{k=0}^{m_s-1} \frac{1}{k!} \sum_{n=0}^k (-1)^{k+n} \binom{k+1}{n+1} \left(\frac{2}{\alpha}n\right)_{(k)} \nonumber \\
  & \stackrel{\mathrm{(c)}}{=} m_s^{-\frac{2}{\alpha}} \sum_{k=0}^{m_s-1} \frac{1}{k!} \sum_{s=0}^{t-1} (-1)^{s} \binom{t}{s} \left(\frac{2}{\alpha}(t-s)-\frac{2}{\alpha}\right)_{(k)} \nonumber \\
  & = m_s^{-\frac{2}{\alpha}} \sum_{k=0}^{m_s-1} \frac{1}{k!} \left(\left(\sum_{s=0}^{t} (-1)^{s} \binom{t}{s} \left(\frac{2}{\alpha}(t-s)-\frac{2}{\alpha}\right)_{(k)} \right) \right. \nonumber \\
  &\hspace{1.5cm} \left. - \left( (-1)^{k+1} \left(-\frac{2}{\alpha}\right)_{(k)}  \right) \right) \nonumber \\
  %& \stackrel{\mathrm{(h)}}{=} m_s^{-\frac{2}{\alpha}} \sum_{k=0}^{m_s-1} \frac{1}{k!} \left(\frac{2}{\alpha}\right)^{(k)} \\
  & \stackrel{\mathrm{(d)}}{=} m_s^{-\frac{2}{\alpha}} \sum_{k=0}^{m_s-1} \frac{1}{k!} \frac{\Gamma\left(\frac{2}{\alpha}+k\right)}{\Gamma\left(\frac{2}{\alpha}\right)}  \nonumber \\
  & = m_s^{-\frac{2}{\alpha}} \frac{1}{\Gamma\left(\frac{2}{\alpha}\right)} \sum_{k=0}^{m_s-1} \frac{\Gamma\left(k+\frac{2}{\alpha}\right)}{\Gamma\left(k+1\right)} \nonumber \\
  & \stackrel{\mathrm{(e)}}{=} m_s^{-\frac{2}{\alpha}} \frac{1}{\Gamma\left(\frac{2}{\alpha}\right)} \frac{\Gamma\left(m_s+\frac{2}{\alpha}\right)}{\frac{2}{\alpha}\Gamma\left(m_s\right)}  \nonumber \\ %\\
  %& \stackrel{\mathrm{(i)}}{=} m_s^{-\frac{2}{\alpha}} \frac{\Gamma\left(m_s + \frac{2}{\alpha}\right)}{\Gamma\left(m_s\right) \Gamma\left(1 + \frac{2}{\alpha}\right)},
  & \stackrel{\mathrm{(f)}}{=} m_s^{-\frac{2}{\alpha}} \frac{\Gamma\left(m_s + \frac{2}{\alpha}\right)}{\Gamma\left(1 + \frac{2}{\alpha}\right) \Gamma\left(m_s\right)},
  %= \Delta_s (m_s,\alpha),
\end{IEEEeqnarray}
  where (a) follows from the introduction of new variable $n \triangleq l-j$, $\textstyle \binom{l}{j} = \binom{l}{l-j}$, and the change of the order of summations,
    (b) follows from $\textstyle \sum_{l=n}^k \binom{l}{n} = \binom{k+1}{n+1}$,
    (c) follows from $\textstyle \binom{k+1}{n+1} = \binom{k+1}{k-n}$ and the introduction of new variables $t \triangleq k+1$ and $s \triangleq k-n$,
    (d) follows from the formula of $\textstyle \sum_{s=0}^t (-1)^s \binom{t}{s} \left((t-s)x+y\right)_{(k)} = 0$ for any complex numbers $x$ and $y$ when $t>k$,
    (e) follows from the formula of $\textstyle \sum_{k=0}^{m-1} \frac{\Gamma\left(k-\beta\right)}{\Gamma\left(k+1\right)} = -\frac{\Gamma\left(m-\beta\right)}{\beta\Gamma\left(m\right)}$ for any real number $\beta$ \cite{Garrappa07_BinCoeff},
    and (f) follows from $\textstyle \Gamma(1+z)=z\Gamma(z)$.

  Furthermore, if $m_s=m_i=m$, from the definition of $\Delta_s(m_s,\alpha)$ and $\Delta_i(m_i,\alpha)$,
\begin{IEEEeqnarray}{ll}\label{eq:DeltaSIm}
  \frac{\Delta_s(m,\alpha)}{\Delta_i(m,\alpha)} = \frac{1}{\Gamma\left(1 - \frac{2}{\alpha}\right)\Gamma\left(1 + \frac{2}{\alpha}\right)} = \frac{\sin(2\pi/\alpha)}{2\pi/\alpha}.
\end{IEEEeqnarray}
\hfill\IEEEQED

%=============================================================================================
\section{Proof of Proposition~\ref{prop:Fxi}} \label{app:proof:prop:Fxi}
%=============================================================================================

\revbegin
  The first order derivative of $f_{\xi}(x)$ with respect to $x$ is given by
\begin{IEEEeqnarray}{l}\label{eq:FxiDerivative}
  \frac{d f_{\xi}(x)}{dx} = \frac{2}{\alpha} x^{-\frac{2}{\alpha}-1}\left(1+\frac{x}{\delta}\right)^{-1} u_\xi(x).
\end{IEEEeqnarray}
%
  %where $\tilde{h}_\xi(x) = x u_\xi(x)$.
  Because $x>0$, $\textstyle \frac{2}{\alpha} x^{-\frac{2}{\alpha}-1}\left(1+\frac{x}{\delta}\right)^{-1}$ in \eqref{eq:FxiDerivative} is always positive, thus the sign of $\textstyle \frac{d f_{\xi}(x)}{dx}$ is only determined by that of $u_\xi(x)$.
  Consider the derivative of $u_\xi(x)$ given by
\begin{IEEEeqnarray}{l}\label{eq:UxiDerivative}
  \frac{d u_{\xi}(x)}{dx} = \frac{1}{\delta} \left( \left(\frac{\alpha}{2}-1\right) - \log\left(1+\frac{x}{\delta}\right) \right).
\end{IEEEeqnarray}
  From the assumption of $\alpha>2$ in Section~\ref{ssec:Model}, if $\textstyle 0 < x < \delta \left( \exp\left(\frac{\alpha}{2}-1\right) - 1 \right)$, $\textstyle \frac{d u_{\xi}(x)}{dx}>0$, i.e., $u_{\xi}(x)$ is increasing.
  Accordingly, as long as $\textstyle \lim_{x\rightarrow 0^+} u_{\xi}(x)>0$, $u_{\xi}(x)>0$ for $\textstyle 0 < x < \delta \left( \exp\left(\frac{\alpha}{2}-1\right) - 1 \right)$.
  %and $u_{\xi}(x) \leq 0$ for $\textstyle x \geq \exp\left(\frac{\alpha}{2}-1\right)-1$ hold.
  Note that $\textstyle \lim_{x\rightarrow 0^+} \frac{\frac{\alpha}{2}\frac{x}{\delta}}{(1+\frac{x}{\delta})\log(1+\frac{x}{\delta})} = \lim_{x\rightarrow 0^+} \frac{\frac{\alpha}{2}}{1+\log(1+\frac{x}{\delta})} = \frac{\alpha}{2} > 1$ by the L'H\^{o}pital's rule.
  Thus, $\textstyle \lim_{x\rightarrow 0^+} u_{\xi}(x)>0$.
  By contrast, for $\textstyle x \geq \delta \left( \exp\left(\frac{\alpha}{2}-1\right) - 1 \right)$, $\textstyle \frac{d u_{\xi}(x)}{dx} \leq 0$, i.e., $u_{\xi}(x)$ is decreasing and eventually becomes negative as $x$ increases.
\revend
  Therefore, on the interval of $x>0$, $u_\xi(x)$ crosses zero once from positive to negative.
  That is, equation $u_\xi(x)=0$ has the unique solution on the interval of $x>0$, which is equal to $x^*$.
  From these results, $f_{\xi}(x)$ is increasing for $\textstyle 0 < x < x^*$ while decreasing for $\textstyle x \geq x^*$.
  Thus, $f_{\xi}(x)$ is maximized at $x=x^*$.

\revbegin
  The increase in $x^*$ with $\alpha$ follows from the monotonic increase of $\textstyle v_{\xi}(x)\triangleq  \left(1+\frac{\delta}{x}\right)\log(1+\frac{x}{\delta})$, because $x^*$ is equal to the solution of $\textstyle v_{\xi}(x) = \frac{\alpha}{2}$.
  This can be shown as follows:
  %That is, if $v_{\xi}(x)$ is a monotonically increasing function, it can be said that $x^*$ increases with $\alpha$.
  The derivative of $v_\xi(x)$ is given by $\textstyle \frac{d v_{\xi}(x)}{dx} = \frac{1}{x^2}\tilde{v}_{\xi}(x)$,
%%
%\begin{IEEEeqnarray}{l}\label{eq:VxiDerivative}
%  \frac{d v_{\xi}(x)}{dx} = \frac{1}{x^2} \left(x-\log(1+x)\right).
%\end{IEEEeqnarray}
%%
  where $\textstyle \tilde{v}_{\xi}(x) \triangleq \left(x-\delta\log(1+\frac{x}{\delta})\right)$.
  Note that $\textstyle \lim_{x\rightarrow 0^+} \tilde{v}_{\xi}(x) = 0$ and $\tilde{v}_{\xi}(x)$ is increasing with $x>0$ because $\textstyle \frac{d \tilde{v}_{\xi}(x)}{dx} = 1-(1+\frac{x}{\delta})^{-1} > 0$ when $x>0$.
\revend
  Therefore, $\textstyle \frac{d v_{\xi}(x)}{dx} > 0$ for $x>0$, and the solution of $\textstyle v_{\xi}(x) = \frac{\alpha}{2}$, i.e., $x^*$, increases with $\alpha$.
\hfill\IEEEQED

\revend

%%=============================================================================================
%\section{Proof of unimodality of $f_\rho(x)$ on a interval of $0<\rho<1$} \label{app:SrayLbConcavity}
%%=============================================================================================
%
%  %Let $u_\rho(\rho)$ denote \eqref{eq:SrayLb}.
%
%\hfill\IEEEQED

%\appendix
%
%
%%============================================================================
%\subsection{Proof of Lemma } %\label{app:lem:K_i}
%
% \hfill \QEDclosed

%%%%%%%%%%%%%%%%%%%%%%%%%%%%%%%%%%%%%%%%%%%%%%%%%%%%%%%%%%%%%%%%%

\bibliographystyle{IEEEtran}
\bibliographystyle{IEEEbib}
\bibliography{./PeerDiscovery_arXiv_IEEEabrv,./PeerDiscovery_arXiv_Ref}

\end{document}